\documentclass{article}
\pdfoutput=1
\usepackage[T1]{fontenc}
\usepackage[utf8]{inputenc}
\usepackage{microtype}
\usepackage[total={6in, 8in}]{geometry}
\usepackage{comment}
\usepackage{amsmath}
\usepackage{amssymb}
\usepackage{amsthm}
\usepackage{graphicx}
\usepackage{mathrsfs}
\usepackage{braket}
\usepackage{bbm}
\usepackage{xcolor}
\usepackage{cite}
\usepackage{doi}
\usepackage{tikz}
\usepackage{booktabs}
\usepackage{threeparttable}
\usepackage{multirow}

\newtheorem{theorem}{Theorem}[section]
\newtheorem{lemma}[theorem]{Lemma}
\usetikzlibrary{quantikz2}
\usepackage{theoremref}

\usepackage{algorithm}
\usepackage{algorithmic}

% % --------- theorems ----------------
% \newtheorem{thm}{\protect\theoremname}
% \theoremstyle{plain}
% \newtheorem{lem}[thm]{\protect\lemmaname}
% \theoremstyle{plain}
% \newtheorem{rem}[thm]{\protect\remarkname}
% \theoremstyle{plain}
% \newtheorem*{lem*}{\protect\lemmaname}
% \theoremstyle{plain}
% \newtheorem*{thm*}{\protect\theoremname}
% \theoremstyle{plain}
% \newtheorem{prop}[thm]{\protect\propositionname}
% \theoremstyle{plain}
% \newtheorem{cor}[thm]{\protect\corollaryname}
% \newtheorem{assumption}[thm]{Assumption}
% \newtheorem{defn}[thm]{Definition}
% \newtheorem{prob}[thm]{Problem}
% \newtheorem{notation}[thm]{Notation}
% \newtheorem{fact}[thm]{Fact}

% --------- theorems ----------------
\newtheorem{thm}{\protect\theoremname}
\theoremstyle{plain}
\newtheorem{lem}{\protect\lemmaname}
\theoremstyle{plain}

\theoremstyle{plain}
\newtheorem*{lem*}{\protect\lemmaname}
\theoremstyle{plain}
\newtheorem*{thm*}{\protect\theoremname}
\theoremstyle{plain}

\theoremstyle{plain}

\theoremstyle{plain}
\newtheorem*{cor*}{\protect\corollaryname}

\newtheorem*{defn*}{Definition}
\newtheorem{prob}{Problem}

\newtheorem{fact}{Fact}

\usepackage[USenglish]{babel}
  \providecommand{\corollaryname}{Corollary}
  \providecommand{\lemmaname}{Lemma}
  \providecommand{\propositionname}{Proposition}
  \providecommand{\remarkname}{Remark}
\providecommand{\theoremname}{Theorem}

\usepackage[normalem]{ulem}
\usepackage{authblk}

\hypersetup{colorlinks=true, linkcolor=blue, urlcolor=blue, citecolor = blue}

\newcommand{\Tr}{\mathrm{Tr}}

\title{Improved Hamiltonian learning and sparsity testing through Bell sampling}
\author[1,2]{Savar D. Sinha}
\affil[1]{Department of Computing and Mathematical Sciences, California Institute of Technology}
\affil[2]{The Division of Physics, Mathematics and Astronomy, California Institute of Technology}
 \author[3,4,5]{Yu Tong}
    \affil[3]{Department of Mathematics, Duke University}
    \affil[4]{Department of Electrical and Computer Engineering, Duke University}
    \affil[5]{Duke Quantum Center, Duke University}

\begin{document}

\maketitle

\begin{abstract}
   We consider the problem of learning an $M$-sparse Hamiltonian and the related problem of Hamiltonian sparsity testing. Through a detailed analysis of Bell sampling, we reduce the total evolution time required by the state-of-the-art algorithm for $M$-sparse Hamiltonian learning to $\widetilde{\mathcal{O}}(M/\epsilon)$, where $\epsilon$ denotes the $\ell^{\infty}$ error, achieving an improvement by a factor of $M$ (ignoring the logarithmic factor) while only requiring access to forward time-evolution. We then establish a connection between Hamiltonian learning and Hamiltonian sparsity testing through Bell sampling, which enables us to propose a Hamiltonian sparsity testing with state-of-the-art total evolution time scaling. 
\end{abstract}

% \tableofcontents
% \clearpage

% \YT{Add a short intro before submitting to QIP. Cite Hamiltonian testing papers!}

\section{Introduction}
\label{sec:intro}

% Hamiltonian learning, why testing is important

Determining the Hamiltonian that governs the dynamics is essential for understanding and controlling quantum systems. 
In quantum simulation and quantum error correction, knowledge of the Hamiltonian permits better optimization and control. 
This task is closely related to classical tasks, such as learning graphical models or Boolean functions, providing rich opportunities to explore how the quantum world connects to and contrasts with the classical one.
We will hereafter refer to the task of identifying the Hamiltonian from dynamics as \textit{Hamiltonian learning.} A related task, learning the Hamiltonian from the equilibrium state of the quantum system, is not considered in this work.

Many fundamental features of quantum mechanics make Hamiltonian learning challenging. The non-commutativity among the Hamiltonian terms makes it hard to isolate their individual effects. Quantum dynamics creates entanglement, making it difficult for classical computers to predict the outcomes of experiments.  
Given the importance of this task and the many challenges associated with it, Hamiltonian learning has been the focus of many previous works \cite{Seif2021compressed,evans2019scalablebayesianhamiltonianlearning,li2020hamiltonian,che2021learning,HaahKothariTang2022optimal,yu2023robust,hangleiter2024robustlylearninghamiltoniandynamics,StilckFrança2024,ZubidaYitzhakiEtAl2021optimal,BaireyAradEtAl2019learning, bairey2020learning,GranadeFerrieWiebeCory2012robust,gu2022practical,wilde2022learnH,KrastanovZhouEtAl2019stochastic,Caro_2024,MobusBluhmCaroEtAl2023dissipation,HolzapfelEtAl2015scalable, HuangTongFangSu2023learning,dutkiewicz2023advantage,MiraniHayden2024learning,NiLiYing2024quantum,LiTongNiGefenYing2023heisenberg,BoixoSomma2008parameter,bakshi2024structure,WangLi2024simulation,odake2023universal}. 
% Most early works focused on achieving the Heisenberg-limited scaling, i.e., estimating parameters of the Hamiltonian to precision $\epsilon$ with $\mathcal{O}(1/\epsilon)$ cost \cite{higgins2007entanglement}, for small quantum systems, or scalability for large quantum systems but without achieving the Heisenberg-limited scaling. 
% Recently, \cite{HuangTongFangSu2023learning} proposed a method to learn an $n$-qubit many-body Hamiltonian with total evolution time $\mathcal{O}(\log(n)/\epsilon)$, where $\epsilon$ is the $\ell^\infty$-error on the coefficients, when the Hamiltonian is low-intersection, i.e., the number of Pauli terms acting on a qubit is a constant that is independent of the system size, and the terms in the Hamiltonian are known. 
% This condition can be seen as a slight relaxation of geometric locality. 
% The main technique used in \cite{HuangTongFangSu2023learning} is Hamiltonian reshaping, which inserts random Pauli operators during the experiment to obtain an effective Hamiltonian that is easy to learn and preserves some of the coefficients. 
% Later works obtained similar results for other control models \cite{dutkiewicz2023advantage} and in bosonic and fermionic scenarios \cite{LiTongNiGefenYing2023heisenberg,NiLiYing2024quantum,MiraniHayden2024learning}. 
% These works all assume that the Hamiltonian terms are known and we only need to learn the coefficients.
Most early studies on Hamiltonian learning aimed to achieve Heisenberg-limited scaling, i.e., estimating Hamiltonian parameters to precision $\epsilon$ with a total cost of $\mathcal{O}(1/\epsilon)$ \cite{higgins2007entanglement}. These works were either restricted to small quantum systems or addressed scalability in large systems but without reaching the Heisenberg limit. 
A recent advance by \cite{HuangTongFangSu2023learning} introduced an approach for learning an $n$-qubit many-body Hamiltonian with total evolution time $\mathcal{O}(\log(n)/\epsilon)$, where $\epsilon$ denotes the $\ell^\infty$-error in the coefficients. This result holds under the assumption that the Hamiltonian is \emph{low-intersection}, meaning that each qubit participates in only a constant number of Pauli terms independent of $n$, and that the set of terms is known in advance. This condition can be regarded as a mild relaxation of geometric locality. The methods developed in \cite{HuangTongFangSu2023learning} have been extended to alternative control models \cite{dutkiewicz2023advantage} as well as to bosonic and fermionic systems \cite{LiTongNiGefenYing2023heisenberg,NiLiYing2024quantum,MiraniHayden2024learning}.

Later works expanded the class of Hamiltonians that can be efficiently learned to include $k$-local Hamiltonians that are $M$-sparse \cite{bakshi2024structure,ma2024learningkbodyhamiltonianscompressed,Zhao2025learning,hu2025ansatzfreehamiltonianlearningheisenberglimited,Arunachalam2025testing}. Here $k$-local means that each Pauli term in the Hamiltonian acts non-trivially on at most $k$ qubits, and $M$-sparse means there are at most $M$ Pauli terms with nonzero coefficients in the Hamiltonian. In all of the above works it is required that $M=\mathrm{poly}(n)$. In particular, compressed sensing plays a crucial role in \cite{ma2024learningkbodyhamiltonianscompressed}, and has been used in previous works on learning Lindbladian noise \cite{Seif2021compressed}.

% Talk about Zhao's work in more detail
Among all previous algorithms, the only other ones that do not assume the Hamiltonian to be $k$-local and still achieve the Heisenberg-limited scaling are Theorem 1.5 in Ref.~\cite{Zhao2025learning} and Ref.~\cite{hu2025ansatzfreehamiltonianlearningheisenberglimited}. 
The former proposes an algorithm with total evolution time $\widetilde{\mathcal{O}}(M/\epsilon)$ to learn all coefficients to within additive error $\epsilon$,\footnote{We use $\widetilde{O}$ to hide the $\mathrm{polylog}(n,M,1/\epsilon)$ factor.} but requires access to the time-reversal dynamics of the Hamiltonian,\footnote{Time-reversal dynamics is sometimes used to refer to $e^{-iH^\top t}$, but here we use it in a way consistent with \cite{Zhao2025learning}.} i.e., in addition to the forward dynamics $e^{-iHt}$, they also require access to $e^{iHt}$, which is not usually available in an experimental setting (although it is possible to implement it from $e^{-iHt}$ with significant overhead \cite{Odake2024higher}). 
The latter algorithm uses only forward dynamics $e^{-iHt}$, but comes with a total evolution time scaling of $\widetilde{\mathcal{O}}(M^2/\epsilon)$.

In this work, through a more detailed analysis of the Bell sampling procedure used extensively in \cite{hu2025ansatzfreehamiltonianlearningheisenberglimited}, we improve the total evolution time estimate of their algorithm to $\widetilde{\mathcal{O}}(M/\epsilon)$ as stated in Theorem~\ref{thm:intol_learn}. This matches the total evolution time needed in \cite{Zhao2025learning}, with the advantage that the resulting Hamiltonian learning algorithm does not employ experimentally difficult time-reversal dynamics.

% The relationship between learning and testing (proper/non-proper learning), need to be able to estimate the distance

In classical learning theory, learning is closely related to the problem of property testing \cite{GoldreichGoldwasserRon1998property}. For Hamiltonian learning, one can naturally consider the corresponding Hamiltonian property testing problems, such as testing whether the Hamiltonian is $k$-local, which is known as \textit{locality testing}, or whether the Hamiltonian can be expressed as a linear combination of $M$ Pauli operators \cite{bluhm2024hamiltonian,Arunachalam2025testing,KallaugherLiang2025hamiltonian}, which is known as \textit{sparsity testing}. These testing problems are of practical interest, as these properties determine which Hamiltonian learning algorithm can best be used to learn the target Hamiltonian. Moreover, the sparsity assumption is necessary for all Heisenberg-limited Hamiltonian learning algorithms that we know today.

For Boolean functions, it has been shown that a \textit{proper} learning algorithm implies a testing algorithm with a similar runtime \cite{GoldreichGoldwasserRon1998property}. For the setting of Hamiltonian learning, such a connection has not been rigorously established in the literature to the best of our knowledge. In this work, we establish such a connection through Bell sampling for Hamiltonian sparsity testing. For an unknown Hamiltonian $H$, we apply the Hamiltonian learning algorithm in \cite{hu2025ansatzfreehamiltonianlearningheisenberglimited}, whose analysis has been improved in this work, to learn an approximation $\hat{H}$. We then engineer the time evolution under the difference $H-\hat{H}$ through Trotterization, which enables us to estimate the normalized Frobenius norm $\|H-\hat{H}\|_F$ through Bell sampling. For $M$-sparse Hamiltonians, $\|H-\hat{H}\|_F$ is guaranteed to be small, while if a Hamiltonian is far from $M$-sparse, then $\|H-\hat{H}\|_F$ is bounded away from 0.

This approach enables us to obtain an algorithm with the best total evolution time scaling among all the algorithms we know. As stated in Theorem~\ref{thm:intol_sparse}, for a Hamiltonian $H$ satisfying $\|H\|\leq 1$, sparsity testing can be done correctly with high probability with total evolution time $\widetilde{\mathcal{O}}(M^{1.5}/\epsilon+M/\epsilon^2)$. Compared to the best previous result in \cite{Arunachalam2025testing} (which focuses on the harder problem of tolerant testing but has a result for the non-tolerant setting in Theorem 4.4 in the arXiv version \cite{ArunachalamDuttEscudero2024arXiv}), which has a total evolution time $\mathcal{O}(M^{1.5}/\epsilon^3)$, we achieve a speedup with a factor of $\min\{1/\epsilon^2,\sqrt{M}/\epsilon\}$. Our technique for connecting the Hamiltonian property testing problem to a Hamiltonian learning algorithm extends beyond sparsity testing, and we expect it to be useful for many other Hamiltonian testing problems.

% Bell sampling, emptyness testing, distance and identifying Hamiltonian terms

% Contribution: We analyze Bell sampling, useful for Hamiltonian learning, useful for emptyness testing, useful for sparsity testing, improvement compared to previous results
    
\paragraph{Notations.}

In this work, we denote the identity operator $I$ and the Pauli operators $X, Y, Z$ in terms of their 2-bit representations $\sigma_{00}, \sigma_{10}, \sigma_{11}, \sigma_{01}$, respectively. In general, for $a, b \in \{0, 1\}^n$, we define
$$\sigma_{ab} = \bigotimes_{j = 1} ^n \sigma_{a_jb_j} = \bigotimes_{j = 1} ^n i^{a_jb_j}X^{a_j}Z^{b_j}$$

We describe sampling procedures involving 2-qubit Bell-states, which are defined as follows:
\begin{align*}
    |\Phi^+\rangle &= \frac{|00\rangle + |11\rangle}{\sqrt 2} & |\Phi^-\rangle &= \frac{|00\rangle - |11\rangle}{\sqrt 2} \\
    |\Psi^+\rangle &= \frac{|01\rangle + |10\rangle}{\sqrt 2} & |\Psi^-\rangle &= \frac{|01\rangle - |10\rangle}{\sqrt 2}
\end{align*}
As a shorthand, we will also use $|\Phi\rangle := |\Phi^+\rangle^{\otimes n}$. We can write any traceless Hamiltonian $H$ in terms of its Pauli decomposition as follows:
$$H = \sum_x \mu_x P_x$$
where $\mu_x \in \mathbb R, P_x \in \mathbb P_n \setminus \{I^{\otimes n}\}$. Here, $\mathbb P_n$ denotes the $n$-qubit Pauli group.

We also utilize several different norms to characterize various aspects of any Hamiltonian $H$. We define the normalized Frobenius norm $\lVert \cdot \rVert_F$ as follows:
$$\lVert H \rVert_F = \sqrt\frac{\Tr[H^\dagger H]}{2^n} = \sqrt{\sum_x |\mu_x|^2}$$
Note that the unnormlized Frobenius norm of $H$ yields the $\ell^2$-norm of the eigenvalue spectrum while the normalized Frobenius norm gives the $\ell^2$-norm of the Pauli coefficients. We also utilize the spectral norm $\lVert \cdot \rVert_2$, defined for $H = \sum_{i} \lambda_i |\xi_i\rangle \langle \xi_i|$ as
$$\lVert H \rVert_2 = \max_{i} |\lambda_i|$$

\section{Bell Sampling Bounds}

For any given unitary $U = \sum_x U_x \sigma_x$, we can use Bell sampling to sample a Pauli $\sigma_x$ with probability $|U_x|^2$. This is formalized and proven below as follows.
\begin{fact}\label{fact:bell_sampling}
    Given a unitary $U = \sum_x U_x\sigma_x$, we can sample a Pauli $\sigma_x$ with probability $|U_x|^2$.
\end{fact}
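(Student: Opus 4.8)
The plan is to exhibit an explicit quantum circuit — prepare a maximally entangled reference state, apply $U$ on one half, and measure in the Bell basis — and then compute the outcome distribution using the ``transpose trick'' (also called the ricochet or Choi–Jamiołkowski identity) to show the probability of outcome $x$ is exactly $|U_x|^2$. First I would set up notation: let $|\Phi\rangle = |\Phi^+\rangle^{\otimes n} = \frac{1}{\sqrt{2^n}}\sum_{z\in\{0,1\}^n}|z\rangle_A|z\rangle_B$ be the maximally entangled state on two $n$-qubit registers $A$ and $B$. Apply $U$ to register $A$, obtaining $(U\otimes I)|\Phi\rangle$. The key algebraic fact is the transpose trick: for any operator $M$ on $n$ qubits, $(M\otimes I)|\Phi\rangle = (I\otimes M^\top)|\Phi\rangle$, which follows directly from $\sum_z (M|z\rangle)\otimes|z\rangle = \sum_{z,z'}|z'\rangle\langle z'|M|z\rangle\otimes|z\rangle = \sum_{z'}|z'\rangle\otimes(\sum_z \langle z'|M|z\rangle|z\rangle) = \sum_{z'}|z'\rangle\otimes M^\top|z'\rangle$, up to the normalization $1/\sqrt{2^n}$.

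Next I would introduce the generalized Bell basis $\{|\Phi_x\rangle := (\sigma_x\otimes I)|\Phi\rangle\}_{x}$ indexed by the $4^n$ Pauli labels; using that the $\sigma_x$ form an orthogonal operator basis with $\Tr[\sigma_x^\dagger \sigma_y] = 2^n\delta_{xy}$, one checks $\langle\Phi_x|\Phi_y\rangle = \frac{1}{2^n}\Tr[\sigma_x^\dagger\sigma_y] = \delta_{xy}$, so this is indeed an orthonormal basis, and measuring in it is a valid projective measurement (for $n=1$ it is the usual four Bell states up to phases, which motivates the name). The amplitude of outcome $x$ in the state $(U\otimes I)|\Phi\rangle$ is then $\langle\Phi_x|(U\otimes I)|\Phi\rangle$. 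Expanding $U = \sum_y U_y\sigma_y$ and using $(\sigma_y\otimes I)|\Phi\rangle = |\Phi_y\rangle$ gives $\langle\Phi_x|(U\otimes I)|\Phi\rangle = \sum_y U_y\langle\Phi_x|\Phi_y\rangle = U_x$. Hence the outcome probability is $|U_x|^2$, as claimed, and these sum to $1$ since $\sum_x|U_x|^2 = \frac{1}{2^n}\Tr[U^\dagger U] = 1$ for unitary $U$.

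There is no serious obstacle here — the statement is essentially the standard Bell-sampling identity — but the one point that deserves care is making the ``measure in the Bell basis'' step operationally concrete: I would note that the measurement $\{|\Phi_x\rangle\langle\Phi_x|\}_x$ can be implemented by applying the standard Bell-measurement circuit (transversal CNOTs from register $A$ to register $B$ followed by Hadamards on $A$) and reading off the computational-basis outcome, whose two $n$-bit halves directly encode the $X$-part and $Z$-part of the Pauli label $x$ in the $2$-bit representation fixed in the Notations paragraph. Equivalently, one can simply invoke the spectral decomposition and cite that any projective measurement is physically realizable; the concrete circuit is what makes the procedure efficient. Writing $x$ in its $(a,b)$ form, the claim is that the transpose trick interacts cleanly with this representation, and I would include the short check that $\sigma_{ab}^\top = (-1)^{a\cdot b}\sigma_{ab}$ so that no extra bookkeeping of signs is needed for the probabilities (only phases, which drop out on taking $|\cdot|^2$).
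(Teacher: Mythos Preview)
Your proposal is correct and follows essentially the same approach as the paper: prepare $|\Phi^+\rangle^{\otimes n}$, apply $U\otimes I$, and measure in the Bell basis, using that $\{(\sigma_x\otimes I)|\Phi\rangle\}_x$ is an orthonormal basis. The only cosmetic difference is that the paper verifies orthonormality by explicitly listing which single-qubit Pauli sends $|\Phi^+\rangle$ to which of $|\Phi^\pm\rangle,|\Psi^\pm\rangle$, whereas you use the cleaner trace identity $\langle\Phi_x|\Phi_y\rangle=\tfrac{1}{2^n}\Tr[\sigma_x^\dagger\sigma_y]$; your additional remarks on the transpose trick and the concrete CNOT+Hadamard measurement circuit are supplementary and not present in the paper's shorter proof.
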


\begin{proof}
    Suppose we are given a $n$ qubit unitary $U = \sum_x U_x\sigma_x$. Using a system of $n$ qubits as well as $n$ ancillas, we can sample from the Pauli distribution of $U$ as follows. Initialize a state consisting of $n$ EPR pairs between the system and the ancillas: 
    $$|\psi\rangle = |\Phi^+\rangle^{\otimes n}$$
    If we now apply $U \otimes I_{2^n}$, we get the following:
    \begin{align*}
        (U \otimes I_{2^n})|\psi\rangle &= U \otimes I_{2^n}|\Phi^+\rangle^{\otimes n} \\
        &= \sum_{x \in \{00, 01, 10, 11\}^n} U_x \left(\bigotimes_{i \in [n]} \sigma_{x_i} \otimes I_{2^n}\right)|\Phi^+\rangle^{\otimes n}
    \end{align*}
    Here, $\sigma_{ab} = i^{ab}X^aZ^b$. We then note that $(\sigma_{01}\otimes I)|\Phi^+\rangle = |\Phi^-\rangle, (\sigma_{10}\otimes I)|\Phi^+\rangle = |\Psi^+\rangle, (\sigma_{11}\otimes I)|\Phi^+\rangle = i|\Psi^-\rangle$. Consequently, every one of the $4^n$ Pauli operators corresponds to a unique choice of $n$ Bell states. Since these states are all orthogonal, measuring in the Bell basis is akin to sampling a Pauli $\sigma_x$ with probability $|U_x|^2$.
\end{proof}

Naturally, this lends itself to the following fact relating the norm of an operator acting on a Bell state to the normalized Frobenius norm of said operator.
\begin{fact}\label{fact:bell_norm}
    For any $n$-qubit operator $A$ that admits a Pauli decomposition, we have that
    $$\lVert (A \otimes I_{2^n})|\Phi\rangle \rVert = \lVert A \rVert_F$$
\end{fact}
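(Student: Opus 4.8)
The plan is to expand $A$ in the Pauli basis and use the orthonormality of the Bell states already established in the proof of Fact~\ref{fact:bell_sampling}. Writing $A = \sum_x A_x \sigma_x$, linearity gives
\[
  (A \otimes I_{2^n})|\Phi\rangle = \sum_x A_x\, (\sigma_x \otimes I_{2^n})|\Phi\rangle .
\]
From the proof of Fact~\ref{fact:bell_sampling}, each vector $(\sigma_x \otimes I_{2^n})|\Phi\rangle$ is, up to a global phase, one of the $4^n$ orthonormal tensor products of single-qubit Bell states, and distinct $x$ give distinct such products. Hence the family $\{(\sigma_x \otimes I_{2^n})|\Phi\rangle\}_x$ is orthonormal, and by the Pythagorean identity
\[
  \bigl\lVert (A \otimes I_{2^n})|\Phi\rangle \bigr\rVert^2 = \sum_x |A_x|^2 = \lVert A \rVert_F^2 ,
\]
where the last equality is exactly the definition of the normalized Frobenius norm recalled in the Notations. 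Taking square roots finishes the argument.

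As an alternative (and a useful sanity check) I would note the ``ricochet''/transpose identity $(A \otimes I_{2^n})|\Phi\rangle = (I_{2^n} \otimes A^\top)|\Phi\rangle$, together with the fact that for any $n$-qubit operator $M$ one has $\langle \Phi | (M \otimes I_{2^n}) | \Phi \rangle = \Tr[M]/2^n$ (since $|\Phi\rangle\langle\Phi|$ is the normalized maximally entangled state). Applying this with $M = A^\dagger A$ yields $\lVert (A \otimes I_{2^n})|\Phi\rangle\rVert^2 = \Tr[A^\dagger A]/2^n = \lVert A\rVert_F^2$ directly, without passing through the Pauli expansion.

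There is no real obstacle here: the only thing to be careful about is bookkeeping the single-qubit phases ($(\sigma_{11}\otimes I)|\Phi^+\rangle = i|\Psi^-\rangle$), but these are global phases on each basis vector and therefore irrelevant to the norm. I would present the first (Pauli-expansion) proof as the main one, since it is the version that connects most transparently to the sampling statement of Fact~\ref{fact:bell_sampling} and to the Pauli-coefficient form of $\lVert\cdot\rVert_F$ that is used in the sequel.
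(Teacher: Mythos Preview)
Your main proof is correct and essentially identical to the paper's: expand $A$ in the Pauli basis, invoke the orthonormality of the $(\sigma_x\otimes I)|\Phi\rangle$ established in Fact~\ref{fact:bell_sampling}, and read off $\sum_x|A_x|^2=\lVert A\rVert_F^2$. The alternative trace-based argument you give is a fine sanity check but goes beyond what the paper presents.
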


\begin{proof}
    Writing $A$ in terms of its Pauli decomposition $(A = \sum_x \mu_x\sigma_x)$, we have from the proof for Fact \ref{fact:bell_sampling} that
    $$A \otimes I_{2^n}|\Phi\rangle = \sum_{x \in \{00, 01, 10, 11\}^n} \mu_x \left(\bigotimes_{i \in [n]} \sigma_{x_i} \otimes I_{2^n}\right)|\Phi\rangle$$
    Since the set of $\left(\bigotimes_{i \in [n]} \sigma_{x_i} \otimes I_{2^n}\right)|\Phi\rangle$ taken over all $x$ forms an orthonormal basis, we have that
    $$\lVert (A \otimes I_{2^n})|\Phi\rangle \rVert = \sqrt{\sum_x |\mu_x|^2} = \lVert A \rVert_F$$
\end{proof}

Using Bell-sampling, we can sample from the Pauli distribution of the time-evolution operator $U(t) = e^{-iHt}$, which for small $t$, is simply approximated as
$$U(t) = I - iHt + O(t^2)$$
Since the Hamiltonian is traceless, this means that any non-identity term sampled is a part of the Hamiltonian with high probability. Consequently, since sampling the identity element gives us no information about the Hamiltonian itself, we need to bound the probability of sampling the identity element $I$ for sufficiently small $t$. These upper and lower bounds are formalized in the following theorem.

\begin{thm}\label{thm:identity_bounds}
    For a traceless Hamiltonian $H$ with bounded spectral norm $\lVert H \rVert_2 \le L$ for $L > 0$, the probability of sampling $I$, $\Pr[I]$ after evolving a system according to $U(t) = e^{-iHt}$ can be bounded as follows:
    $$0 \le 1 - \lVert H \rVert_F^2t^2 \le \Pr[I] \le 1 - 2c\lVert H \rVert_F^2t^2$$
    for any $c \in (0, 1/2)$ and $t \le \frac{t^*(c)}{2L}$, where $t^*(c) \in (0, 2\pi)$ satisfies $\cos(t^*(c)) = 1 - c(t^*(c))^2$.
\end{thm}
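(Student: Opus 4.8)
The plan is to evaluate $\Pr[I]$ exactly as the modulus-squared of the identity Pauli coefficient of $U(t)=e^{-iHt}$, and then bound it with Taylor-type estimates on $\cos$. By Fact~\ref{fact:bell_sampling}, Bell sampling returns $I^{\otimes n}$ with probability $|U_I|^2$, where $U_I=\frac{1}{2^n}\Tr[U(t)]$. Diagonalizing $H=\sum_j\lambda_j|\xi_j\rangle\langle\xi_j|$ gives $U_I=\frac{1}{2^n}\sum_j e^{-i\lambda_j t}$, so
\[
\Pr[I]=\frac{1}{4^n}\Big|\sum_j e^{-i\lambda_j t}\Big|^2=\frac{1}{4^n}\sum_{j,k}\cos\big((\lambda_j-\lambda_k)t\big),
\]
the imaginary part cancelling under $j\leftrightarrow k$. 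The key identity I would record up front is that tracelessness ($\sum_j\lambda_j=0$) together with $\lVert H\rVert_F^2=\frac{1}{2^n}\sum_j\lambda_j^2$ yields
\[
\frac{1}{4^n}\sum_{j,k}(\lambda_j-\lambda_k)^2=\frac{2}{2^n}\sum_j\lambda_j^2-\frac{2}{4^n}\Big(\sum_j\lambda_j\Big)^2=2\lVert H\rVert_F^2 .
\]

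For the lower bound, I would apply $\cos\theta\ge 1-\theta^2/2$ (valid for all real $\theta$) termwise in the double sum, which with the identity above gives $\Pr[I]\ge 1-\lVert H\rVert_F^2 t^2$ with no restriction on $t$; the leftmost inequality $0\le 1-\lVert H\rVert_F^2 t^2$ then records that $\lVert H\rVert_F t\le 1$, which the hypothesis $t\le t^*(c)/(2L)$ ensures via $\lVert H\rVert_F\le\lVert H\rVert_2\le L$. For the upper bound I would instead use $\cos\theta\le 1-c\theta^2$, valid on $|\theta|\le t^*(c)$. This applies to every term of the double sum because $|\lambda_j|,|\lambda_k|\le\lVert H\rVert_2\le L$ forces $|(\lambda_j-\lambda_k)t|\le 2Lt\le t^*(c)$; hence $\Pr[I]\le 1-ct^2\cdot\frac{1}{4^n}\sum_{j,k}(\lambda_j-\lambda_k)^2=1-2c\lVert H\rVert_F^2 t^2$, as desired. (The bound $c\,(t^*(c))^2\le 2$, which follows from $\cos t^*(c)\ge-1$, also keeps this upper estimate nonnegative.)

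The one step that is not a one-line computation — and hence the main obstacle — is the elementary inequality $\cos\theta\le 1-c\theta^2$ for $c\in(0,1/2)$ on $|\theta|\le t^*(c)$, together with well-definedness of $t^*(c)$. I would study $f(\theta)=1-c\theta^2-\cos\theta$: it is even with $f(0)=f'(0)=0$ and $f''(0)=1-2c>0$, so $f>0$ on a punctured neighbourhood of $0$, while $f(2\pi)=-4\pi^2c<0$; by the intermediate value theorem $f$ has a smallest positive root, taken to be $t^*(c)\in(0,2\pi)$. Since $f$ has no root in $(0,t^*(c))$ and is positive just to the right of $0$, continuity forces $f\ge 0$ on $[0,t^*(c)]$, and evenness extends this to $[-t^*(c),t^*(c)]$ — precisely the claimed bound. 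With this lemma in hand, everything else (the Bell-sampling identification of $\Pr[I]$, the tracelessness computation, and the two $\cos$ estimates) is routine bookkeeping.
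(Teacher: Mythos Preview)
Your proof is correct and takes essentially the same approach as the paper: express $\Pr[I]=\frac{1}{4^n}\sum_{j,k}\cos\big((\lambda_j-\lambda_k)t\big)$ via the spectral decomposition of $H$, then apply the termwise bounds $\cos\theta\ge 1-\theta^2/2$ and $\cos\theta\le 1-c\theta^2$ together with the tracelessness identity $\frac{1}{4^n}\sum_{j,k}(\lambda_j-\lambda_k)^2=2\lVert H\rVert_F^2$. You go a bit further than the paper by supplying the IVT argument for the existence of $t^*(c)$ and the validity of $\cos\theta\le 1-c\theta^2$ on $[-t^*(c),t^*(c)]$, which the paper simply asserts.
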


\begin{proof}
    Suppose we have the time-evolution operator $U(t) = e^{-iHt} = \sum_x U_x\sigma_x$. According to Fact \ref{fact:bell_sampling}, we have that
    $$\Pr[I] = |U_I|^2$$
    To compute $U_x$, we can take the trace of $U(t)$, as the traceless terms will vanish, giving us that
    $$\Tr[U(t)] = \sum_x U_x\Tr[\sigma_x] = 2^nU_I \implies U_I = \frac{\Tr[U(t)]}{2^n}$$
    Therefore, we have that the probability of sampling the identity is given as
    $$\Pr[I] = |U_I|^2 = \frac{|\Tr[U(t)]|^2}{4^n}$$

    To evaluate the trace of $U(t)$, we first write out the spectral decomposition of $H$:
    $$H = \sum_j \lambda_j |\xi_j\rangle \langle \xi_j|,$$
    where $\sum_j \lambda_j = 0$ (traceless), $\max |\lambda_j| < 1$ (bounded spectral norm), and $\sum_j \lambda_j^2 = 2^n\lVert H \rVert_F^2$. Note that each $\lambda_j \in \mathbb R$ since $H$ is Hermitian. By Sylvester's Formula, we have that
    $$U(t) = \sum_j e^{-i\lambda_j t} |\xi_j\rangle \langle \xi_j|$$
    Computing the trace of this expression, we have that
    $$\Tr[U(t)] = \sum_j e^{-i\lambda_j t}$$

    Substituting this value into the expression for the probability, we have the following:
    $$\Pr[I] = \frac{|\Tr[U(t)]|^2}{4^n} = \frac{(\sum_j e^{-i\lambda_j t})(\sum_j e^{i\lambda_j t})}{4^n} = \frac{\sum_{j, k} e^{-i(\lambda_j - \lambda_k) t}}{4^n}$$
    Since the numerator is $|\Tr[U(t)]|^2 \in \mathbb R$, we know that the imaginary part of the numerator vanishes, giving us that
    $$\Pr[I] = \frac{\sum_{j, k} \cos[(\lambda_j - \lambda_k) t]}{4^n}$$

    To compute the upper bound on this probability, we can observe that
    $$\cos[(\lambda_j - \lambda_k) t] \le 1 - c[(\lambda_j - \lambda_k) t]^2$$
    for $0 < c < 1/2$ and suitably small $|\lambda_j - \lambda_k| t < t^*(c)$, where $t^*(c) \in (0, 2\pi)$ is some constant only dependent on $c$ that satisfies
    $$\cos(t^*(c)) = 1 - c(t^*(c))^2$$
    By the spectral norm bound, we have that $|\lambda_j - \lambda_k| \le 2L$, meaning that any $t < \frac{t^*(c)}{2L}$, this bound holds. Expanding out this bounds gives us
    $$\Pr[I] = \frac{\sum_{j, k} \cos[(\lambda_j - \lambda_k) t]}{4^n} \le \frac{\sum_{j, k} 1 - c(\lambda_j - \lambda_k)^2t^2}{4^n} = 1 - \frac{ct^2}{4^n}\sum_{j, k}(\lambda_j - \lambda_k)^2$$
    Expanding out the quadratic expression, we have the following:
    $$\sum_{j, k}(\lambda_j - \lambda_k)^2 = \sum_{j, k}[\lambda_j^2 + \lambda_k^2 - 2\lambda_j\lambda_k] = 2^n\sum_{j}\lambda_j^2 + 2^n\sum_{k}\lambda_k^2 - 2\sum_{j}\lambda_j\sum_{k}\lambda_k = 2(4^n)\lVert H \rVert_F^2$$
    Substituting this back into the bound obtained previously gives
    $$\Pr[I] \le 1 - \frac{ct^2}{4^n}\sum_{j, k}(\lambda_j - \lambda_k)^2 = 1 - 2ct^2\lVert H \rVert_F^2$$

    Using an equivalent derivation, we can lower bound $\Pr[I]$ by using the following lower bound, which holds for all $\lambda_j, \lambda_k, t$:
    $$\cos[(\lambda_j - \lambda_k) t] \ge 1 - \frac{1}{2}[(\lambda_j - \lambda_k) t]^2$$
    Repeating the  same derivation gives us that
    $$\Pr[I] \ge 1 - t^2 \lVert H \rVert_F^2$$
\end{proof}

Using the upper bound on $\Pr[I]$ we can develop an efficient protocol for emptiness testing, while using the lower bound, we can demonstrate an improved sampling complexity for the sparsity learning algorithm presented in \cite{hu2025ansatzfreehamiltonianlearningheisenberglimited}.

\section{Emptiness Testing}

Suppose we are given a traceless Hamiltonian $H$ with bounded spectral norm $\lVert H \rVert_2 \le L$ for $L < 0$. A simple decision problem we can consider is whether such a Hamiltonian is empty or not. In general, we say that $H$ is empty if $\lVert H \rVert_F < \epsilon_1$ and it is not empty if $\lVert H \rVert_F > \epsilon_2$ for $\epsilon_2 > \epsilon_1$. 

\subsection{Intolerant Testing}

We first consider the intolerant testing case, where $\epsilon_1 = 0$ and $\epsilon_2 = \epsilon$. Formally, we have the following decision problem.

\begin{prob}[Intolerant Emptiness Testing]
    Given a traceless Hamiltonian $H$ with bounded spectral norm $\lVert H \rVert_2 \le L$, $L > 0$, that satisfies either $H = 0$ or $\lVert H \rVert_F \ge \epsilon$, decide if $H$ is either zero or $\epsilon$-far from zero.
\end{prob}

In order to distinguish between these two cases, we observe that Bell sampling will only return non-identity samples if the Hamiltonian is not empty. This approach is formalized in the following theorem and subsequent proof.

\begin{thm}\label{thm:intol_empt}
    There exists an algorithm that decides the intolerant emptiness testing for $H$ with probability of success $\ge 1 - \delta$ for a given threshold $\epsilon$ that requires $N$ queries and total time evolution $T$, where
    $$N = \mathcal O\left(\frac{L^2\log(1/\delta)}{\epsilon^2}\right)$$
    and
    $$T = \mathcal O\left(\frac{L\log(1/\delta)}{\epsilon^2}\right)$$
\end{thm}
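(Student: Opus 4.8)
The plan is to run Bell sampling on the forward time evolution $U(t)=e^{-iHt}$ for a single, fixed, carefully chosen evolution time $t$, and to declare $H$ non-empty as soon as one non-identity Pauli is observed. Concretely, fix any constant $c\in(0,1/2)$ (say $c=1/4$) and set $t = t^*(c)/(2L)$, the largest value for which the upper bound of Theorem~\ref{thm:identity_bounds} applies. Using the procedure of Fact~\ref{fact:bell_sampling}, draw $N$ independent Bell samples from $U(t)$; output ``empty'' if all $N$ outcomes equal the identity Pauli $I$, and ``not empty'' otherwise. This uses $N$ queries to $e^{-iHt}$, each of evolution time $t$, for a total evolution time $T=Nt$.

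Next I would establish correctness, exploiting that the test has one-sided error. If $H=0$ then $U(t)=I$, every Bell sample returns $I$ with certainty, and the algorithm outputs ``empty'' with probability $1$. If instead $\lVert H\rVert_F\ge\epsilon$, then the upper bound in Theorem~\ref{thm:identity_bounds} (valid since $t\le t^*(c)/(2L)$) gives that a single sample returns $I$ with probability
$$\Pr[I]\le 1-2c\lVert H\rVert_F^2 t^2 \le 1 - 2c\epsilon^2 t^2 = 1 - \frac{c\,(t^*(c))^2}{2L^2}\,\epsilon^2 =: 1-p,$$
where $p = \Theta(\epsilon^2/L^2)$ because $c$ and $t^*(c)$ are absolute constants. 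Hence the probability of seeing $I$ on all $N$ independent samples is at most $(1-p)^N\le e^{-pN}$, which is $\le\delta$ once $N\ge p^{-1}\ln(1/\delta)$. This yields the query bound $N=\mathcal O\!\left(L^2\log(1/\delta)/\epsilon^2\right)$, and since $t=\Theta(1/L)$, the total evolution time is $T=Nt=\mathcal O\!\left(L\log(1/\delta)/\epsilon^2\right)$, as claimed.

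There is no serious obstacle here; the only points requiring care are (i) checking that the chosen $t$ indeed satisfies the hypothesis $t\le t^*(c)/(2L)$ of Theorem~\ref{thm:identity_bounds}, so that the quadratic upper bound on $\Pr[I]$ holds, and (ii) confirming that $t^*(c)$, defined implicitly by $\cos(t^*(c))=1-c(t^*(c))^2$ with $t^*(c)\in(0,2\pi)$, is a well-defined constant bounded away from $0$ once $c$ is fixed, so that $p=\Theta(\epsilon^2/L^2)$ and the hidden constants in $N$ and $T$ depend only on $c$. Optimizing over $c$ only affects these constants, so any fixed admissible choice already gives the stated asymptotics.
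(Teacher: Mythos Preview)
Your proposal is correct and matches the paper's proof essentially step for step: the paper likewise uses Bell sampling at a fixed time $t=\mathcal O(1/L)$, observes that $H=0$ yields only identity outcomes while $\lVert H\rVert_F\ge\epsilon$ gives $\Pr[\lnot I]\ge 2c\epsilon^2 t^2=\Theta(\epsilon^2/L^2)$ via Theorem~\ref{thm:identity_bounds}, and then invokes a Chernoff-type bound to conclude that $N=\mathcal O(L^2\log(1/\delta)/\epsilon^2)$ samples suffice. Your write-up is in fact slightly more explicit than the paper's (you spell out the $(1-p)^N\le e^{-pN}$ step rather than citing ``the Chernoff bound''), but the argument is the same.
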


\begin{proof}
    If $H = 0$, the time-evolution operator $U(t) = I$, meaning that Bell-sampling will always give the identity as an output.

    On the other hand, if $\lVert H \rVert_F > \epsilon$, we have that $H \ne 0$, meaning that there is a nonzero probability of getting non-identity samples. According to the upper bound on $\Pr[I]$ derived in Theorem~\ref{thm:identity_bounds}, we have that
    $$\Pr[I] \le 1 - 2ct^2\lVert H \rVert_F \le 1 - 2c\epsilon^2t^2$$
    Consequently, we have that the probability of sampling a non-identity element for $t = \mathcal O \left(\frac{1}{L}\right)$ is given as
    $$\Pr[\lnot I] = 2c\epsilon^2t^2 = \mathcal O\left(\frac{\epsilon^2}{L^2}\right)$$

    Using the Chernoff bound, we then have that to sample a non-identity element with probability $\ge 1 - \delta$, we need to use $N$ samples, where
    $$N = \mathcal O\left(\frac{L^2\log(1/\delta)}{\epsilon^2}\right)$$
    Since our evolution time is $\mathcal O(1/L)$ for each sample, we have that the evolution time is given as
    $$T = \mathcal O\left(\frac{L\log(1/\delta)}{\epsilon^2}\right)$$
\end{proof}

Note that in the proof above, $c \in (0, 1/2)$ is an arbitary constant which can be used to bound the probability of sampling an identity element from above. As $c \to 0, t^*(c) \to 2\pi$ and as $c \to 1/2, t^*(c) \to 0$. Since $t^*(c)$ is an invertible function of $c$ in this interval, the converse relation also holds. Consequently, we can arbitrarily fix $t^*(c) \in (0, 2\pi)$, and this will correspond to a given $c \in (0, 1/2)$. Since $t < \frac{t^*(c)}{2L}$, this means that we can evolve for any constant time $t \in (0, \pi/L)$. With this, we can now compile all the steps listed previously into the following protocol for deciding the intolerant emptiness testing problem.

\begin{algorithm}[H]
\caption{Intolerant Emptiness Testing}\label{alg:intol_empt}
\begin{algorithmic}[1]
\REQUIRE Query access to $U(t) = e^{-iHt}$, spectral norm bound $L$, emptiness threshold $\epsilon$, error rate $\delta$
\STATE Set $t \leftarrow \mathcal O(1/L)$.
\STATE Set $N \leftarrow \mathcal O\left(\frac{L^2\log(1/\delta)}{\epsilon^2}\right)$ 
\FOR{$i = 1, \dots, N$}
    \STATE Sample $\sigma_x$ from $U(t) = \sum_x U_x \sigma_x$ via Bell-sampling
    \IF{$\sigma_x \ne I$}
        \RETURN \texttt{NOT EMPTY}
    \ENDIF
\ENDFOR
\RETURN \texttt{EMPTY}
\ENSURE Whether $H$ is empty $(H = 0)$ or non-empty $(\lVert H \rVert_F \ge \epsilon)$ with success probability $\ge 1 - \delta$.
\end{algorithmic}
\end{algorithm}

With this algorithm, we can now distinguish between a Hamiltonian that is completely empty from one that has norm greater than $\epsilon$ in $\mathcal O(1/\epsilon^2)$ time. However, these criteria for an empty Hamiltonian are somewhat restrictive, as we still want to classify Hamiltonians with negligible norm as being empty. In order to resolve this issue, we generalize this algorithm to the tolerant framework.

\subsection{Tolerant Testing}

For tolerant emptiness testing, we instead have the following decision problem.

\begin{prob}[Tolerant Emptiness Testing]
    Given traceless Hamiltonian $H$ with bounded spectral norm $\lVert H \rVert_2 \le L$ that satisfies either $\lVert H \rVert_F \le \epsilon_1$ or $\lVert H \rVert_F \ge \epsilon_2$, where $\epsilon_2 > \epsilon_1$, decide whether $H$ is either $\epsilon_1$-close or $\epsilon_2$-far from zero.
\end{prob}

Note that intolerant emptiness testing problem arises as a special case of this problem where we set $\epsilon_1 = 0$. Consequently, instead of sampling Pauli operators until a non-identity Pauli is sampled, we instead need enough samples in order to distinguish between two Bernoulli distributions. This approach is formalized in the following theorem.

\begin{thm}\label{thm:tol_empt}
    There exists an algorithm that decides the tolerant emptiness testing problem for $H$ with probability of success $\ge 1 - \delta$ for a given threshold $\epsilon$ with total number of samples
    $$N = \mathcal O\left(\frac{L^2\log(1/\delta)}{\epsilon_2^2(1 - \epsilon_1^2/\epsilon_2^2)^3}\right)$$
    and total evolution time
    $$T = \mathcal O\left(\frac{L\log(1/\delta)}{\epsilon_2^2(1 - \epsilon_1^2/\epsilon_2^2)^{2.5}}\right)$$
\end{thm}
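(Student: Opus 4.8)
The plan is to run Bell sampling on the forward time evolution $U(t)=e^{-iHt}$ for a carefully chosen constant time $t=\Theta(1/L)$, record the empirical fraction $\hat p$ of runs that return a non-identity Pauli, and threshold it. By Theorem~\ref{thm:identity_bounds}, for any $c\in(0,1/2)$ and any $t\le t^*(c)/(2L)$ we have $2c\|H\|_F^2t^2\le\Pr[\lnot I]\le\|H\|_F^2t^2$. Consequently, in the $\epsilon_1$-close case the probability of a non-identity outcome is at most $p_1:=\epsilon_1^2t^2$, while in the $\epsilon_2$-far case it is at least $p_2:=2c\,\epsilon_2^2t^2$. These two regimes are separated precisely when $2c\,\epsilon_2^2>\epsilon_1^2$, so I would choose $c$ with $2c=\tfrac12\bigl(1+\epsilon_1^2/\epsilon_2^2\bigr)$; since $0\le\epsilon_1<\epsilon_2$ this is a valid constant $c\in[1/4,1/2)$, and the resulting probability gap is $\Delta:=p_2-p_1=\tfrac12\bigl(\epsilon_2^2-\epsilon_1^2\bigr)t^2=\tfrac12\epsilon_2^2\bigl(1-\epsilon_1^2/\epsilon_2^2\bigr)t^2$.

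Next I would take $t$ as large as Theorem~\ref{thm:identity_bounds} permits. Using the elementary inequality $\cos\tau\le 1-\tfrac12\tau^2+\tfrac1{24}\tau^4$, the defining relation $\cos(t^*(c))=1-c\,(t^*(c))^2$ forces $(\tfrac12-c)(t^*(c))^2\le\tfrac1{24}(t^*(c))^4$, i.e.\ $t^*(c)\ge\sqrt{12(1-2c)}$. Hence I set $t:=\sqrt{3(1-2c)}/L$, so that $t^2=\Theta\bigl((1-\epsilon_1^2/\epsilon_2^2)/L^2\bigr)$ and therefore $\Delta=\Theta\bigl(\epsilon_2^2(1-\epsilon_1^2/\epsilon_2^2)^2/L^2\bigr)$.

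The test itself is: draw $N$ Bell samples, let $\hat p$ be the fraction that are non-identity, and report ``$\epsilon_1$-close'' iff $\hat p\le\tau$ where $\tau:=\tfrac12(p_1+p_2)$. In the close case the true mean $p=\Pr[\lnot I]$ satisfies $p\le p_1=\tau-\Delta/2$; in the far case $p\ge p_2=\tau+\Delta/2$. A variance-aware Chernoff/Bernstein bound gives $\Pr[\hat p-p\ge s]\le\exp\bigl(-\Omega(Ns^2/(p+s))\bigr)$ together with a matching lower-tail bound, and one checks that the hardest instances are $p=p_2$ in the far regime (since $p\mapsto p/(p-\tau)^2$ is decreasing for $p>\tau$) and $p=p_1$ in the close regime, in both of which the governing mean is $O(\epsilon_2^2t^2)$. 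Taking $s=\Delta/2$ then shows that $N=O\bigl(\epsilon_2^2t^2\log(1/\delta)/\Delta^2\bigr)$ samples suffice for success probability $\ge1-\delta$. Substituting the chosen $t$ and the value of $\Delta$ gives $N=O\bigl(L^2\log(1/\delta)/(\epsilon_2^2(1-\epsilon_1^2/\epsilon_2^2)^3)\bigr)$, and since each sample costs evolution time $t$, the total is $T=Nt=O\bigl(L\log(1/\delta)/(\epsilon_2^2(1-\epsilon_1^2/\epsilon_2^2)^{2.5})\bigr)$, as claimed.

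The main obstacle is pinning down the exact exponents of $1-\epsilon_1^2/\epsilon_2^2$. This requires (i) choosing $c$ to optimally trade the probability gap $\Delta\propto(2c\epsilon_2^2-\epsilon_1^2)t^2$ against the admissible evolution time $t\lesssim t^*(c)/L$, which collapses like $\sqrt{1-2c}$ as $c\to1/2$, and (ii) invoking the variance-aware concentration bound rather than Hoeffding; the latter would replace the $\epsilon_2^2$ factor by $\epsilon_2^4$ and further worsen the power of $1-\epsilon_1^2/\epsilon_2^2$. The remaining steps are routine, and setting $\epsilon_1=0$ recovers Theorem~\ref{thm:intol_empt}.
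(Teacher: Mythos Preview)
Your proposal is correct and follows essentially the same approach as the paper: both pick $c=(\epsilon_1^2+\epsilon_2^2)/(4\epsilon_2^2)$, take $t\approx\sqrt{3(1-2c)}/L$, and threshold the empirical non-identity fraction using a variance-aware concentration bound. The only cosmetic differences are that the paper uses the KL form of Chernoff--Hoeffding together with Lemma~\ref{lem:kl} in place of your Bernstein bound, and arrives at the same $t$ via the approximation $c\approx\tfrac12-L^2t^2/6$ rather than your Taylor-series lower bound $t^*(c)\ge\sqrt{12(1-2c)}$.
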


\begin{proof}
    From Theorem~\ref{thm:identity_bounds}, we have that
    $$1 - \lVert H \rVert_F^2t^2 \le \Pr[I] \le 1 - 2c\lVert H \rVert_F^2t^2$$
    To distinguish between the cases $\lVert H \rVert_F < \epsilon_1, \lVert H \rVert_F > \epsilon_2$, we must choose a value of $c$ such that these bounds do not overlap in both cases. 
    
    If $\lVert H \rVert_F < \epsilon_1$, we have that
    $$\Pr[I] \ge 1 - \epsilon_1^2 t^2 \implies \Pr[\lnot I] \le \epsilon_1^2t^2 = p_1$$
    Otherwise, if $\lVert H \rVert_F > \epsilon_2$, we have that
    $$\Pr[I] \le 1 - 2c\epsilon_2^2 t^2 \implies \Pr[\lnot I] \ge 2c\epsilon_2^2t^2 = p_2$$

    Therefore, we must choose $c$ s.t.
    $$1 - 2c\epsilon_2^2 t^2 < 1 - \epsilon_1^2 t^2 \implies c > \frac{\epsilon_1^2}{2\epsilon_2^2}$$
    Since $\epsilon_1 < \epsilon_2$, we have that $\frac{\epsilon_1^2}{2\epsilon_2^2} < 1/2$, meaning that there is always a feasible range of $c$ allowed:
    $$\frac{\epsilon_1^2}{2\epsilon_2^2} < c < \frac{1}{2}$$

    Under this constraint, we have that $2c\epsilon_2^2t^2 > \epsilon_1^2t^2$, meaning that we can use Monte Carlo estimation to distinguish between these two cases with $N \sim 1/\epsilon_2^2$ samples for $\epsilon_1 = \mathcal O(\epsilon_2)$ \cite{PRXQuantum.3.040305}. Essentially, we simply estimate the probability of sampling an identity element and then check whether the result computed is less than or greater than $p_{1/2} =\frac{p_1 + p_2}{2}$. Using the Chernoff-Hoeffding bound, we have that the probability of incorrectly choosing the distribution can be upper bounded as follows:
    $$\Pr\left(\text{error}\right) \le \max\{e^{-D(p_{1/2}||p_1)N},e^{-D(p_{1/2}||p_2)N}\}$$
    To upper bound this quantity, we simply need to lower bound the Kullback-Leibler Divergence expressions. In particular, we use the lower bound stated in Lemma \ref{lem:kl} to get the following:
    $$\min\left\{D(p_{1/2}||p_1), D(p_{1/2}||p_2)\right\} \ge \min\left\{\frac{(p_{1/2} - p_1)^2}{2p_{1/2}}, \frac{(p_2 - p_{1/2})^2}{2p_{2}}\right\} = \frac{(p_2 - p_1)^2}{8p_{2}} = \frac{(2c\epsilon_2^2 - \epsilon_1^2)^2t^2}{16c\epsilon_2^2}$$
    Consequently, to obtain an error rate $\le \delta$, we can set
    $$\Pr\left(\text{error}\right) \le \max\{e^{-D(p_{1/2}||p_1)N},e^{-D(p_{1/2}||p_2)N}\} \le \exp\left(-\frac{N(2c\epsilon_2^2 - \epsilon_1^2)^2t^2}{16c\epsilon_2^2}\right) \le \delta$$
    Rearranging gives us that
    $$N \ge \frac{16c\epsilon_2^2\log(1/\delta)}{(2c\epsilon_2^2 - \epsilon_1^2)^2t^2}$$
    Consequently, we have that a sufficient sampling complexity is on the following order
    $$N = \mathcal O\left(\frac{\epsilon_2^2\log(1/\delta)}{(2c\epsilon_2^2 - \epsilon_1^2)^2t^2}\right)$$
    
    For any value of $c$, we can approximate the corresponding optimal value of $t$ using the most restrictive range for $t$, which occurs when $|\lambda_j - \lambda_k| = 2\lVert H \rVert_2 \le 2L$, giving us the following:
    $$\cos(2Lt) = 1 - 4cL^2t^2 \implies c = \frac{1 - \cos(2Lt)}{4L^2t^2} = \frac{\sin^2(Lt)}{2L^2t^2} \approx \frac12 - \frac{L^2t^2}{6}$$
    Rearranging for $t$ therefore gives us
    $$t \approx \frac{\sqrt{3 - 6c}}{L}$$
    Substituting this into the expression from before gives us
    $$N = \mathcal O\left(\frac{L^2\epsilon_2^2\log(1/\delta)}{(2c\epsilon_2^2 - \epsilon_1^2)^2(3 - 6c)}\right)$$
    Maximizing the denominator by setting the derivative with respect to $c$ equal to zero, we find the optimal $c$ to be $c = \frac{\epsilon_1^2 + \epsilon_2^2}{4\epsilon_2^2}$, or in other words, the arithmetic mean of the upper and lower bounds on $c$. Substituting this in gives us
    $$N = \mathcal O\left(\frac{L^2\log(1/\delta)}{\epsilon_2^2(1 - \epsilon_1^2/\epsilon_2^2)^3}\right)$$
    For the total evolution time $T$, we multiply by an additional factor of $t \approx \frac{\sqrt{3 - 6c}}{L} = \frac{\sqrt{\frac{3}{2}(1 - \epsilon_1^2/\epsilon_2^2)}}{L}$, giving us that
    $$T = \mathcal O\left(\frac{L\log(1/\delta)}{\epsilon_2^2(1 - \epsilon_1^2/\epsilon_2^2)^{2.5}}\right)$$
\end{proof}

With the choice of $c = \frac{\epsilon_1^2 + \epsilon_2^2}{4\epsilon_2^2}$, we have that
$$p_{1/2} = \frac{\left(\epsilon_1^2 + 2c\epsilon_2^2\right)t^2}{2} = \frac{(3\epsilon_1^2 + \epsilon_2^2)t^2}{4}$$

We can observe that this expression for the total runtime and sample complexity is consistent with that of Theorem~\ref{thm:intol_empt}, as substituting $\epsilon_1 = 0, \epsilon_2 = \epsilon$ yields the same result. Writing out the full algorithm for this procedure, we have the following:
\begin{algorithm}[H]
\caption{Tolerant Emptiness Testing}\label{alg:tol_empt}
\begin{algorithmic}[1]
\REQUIRE Query access to $U(t) = e^{-iHt}$, spectral bound $L$, emptiness thresholds $\epsilon_1, \epsilon_2$, error rate $\delta$
\STATE Set $t \leftarrow \mathcal O\left(\frac{\sqrt{1 - \epsilon_1^2/\epsilon_2^2}}{L}\right)$.
\STATE Set $N \leftarrow \mathcal O\left(\frac{L^2\log(1/\delta)}{\epsilon^2(1 - \epsilon_1^2/\epsilon_2^2)^3}\right)$ 
\STATE Initialize $m \leftarrow 0$
\FOR{$i = 1, \dots, N$}
    \STATE Sample $\sigma_x$ from $U(t) = \sum_x U_x \sigma_x$ via Bell-sampling
    \IF{$\sigma_x \ne I$}
        \STATE Increment $m \leftarrow m + 1$
    \ENDIF
\ENDFOR
\IF{$m/N \ge \frac{(3\epsilon_1^2 + \epsilon_2^2)t^2}{4}$}
    \RETURN \texttt{NOT EMPTY}
\ELSE
    \RETURN \texttt{EMPTY}
\ENDIF
\ENSURE Whether $H$ is empty $(\lVert H \rVert_F < \epsilon_1)$ or non-empty $(\lVert H \rVert_F > \epsilon_2)$ with success probability $\ge 1 - \delta$.
\end{algorithmic}
\end{algorithm}

\section{Sparse Hamiltonian Learning}

Learning sparse Hamiltonians in the intolerant regime has previously been shown to be done with Heisenberg scaling \cite{hu2025ansatzfreehamiltonianlearningheisenberglimited}. We present a refined average-case evolution-time analysis of the algorithm in \cite{hu2025ansatzfreehamiltonianlearningheisenberglimited}.

\begin{prob}[Sparse Hamiltonian Learning] 
    Given an exactly $M$-sparse Hamiltonian $H = \sum_x \mu_x P_x$ with unknown Pauli coefficients where $|\mu_x| \le 1$, learn a $M$-sparse approximation $\hat H = \sum_x \hat \mu_x \sigma_x$ such that $\lVert \boldsymbol \mu - \boldsymbol {\hat \mu} \rVert_\infty \le \epsilon$ where $\boldsymbol \mu = (\mu_1, \mu_2, \dots), \boldsymbol {\hat \mu} = (\hat \mu_1, \hat \mu_2, \dots)$. 
\end{prob}

\begin{thm}[Sparsity Learning\cite{hu2025ansatzfreehamiltonianlearningheisenberglimited}]\label{thm:intol_learn}
    There exists an algorithm that given a $M$-sparse Hamiltonian $H = \sum_x \mu_xP_x$, can compute a $M$-sparse approximation $\hat H = \sum_x \hat \mu_x P_x$ such that
    $$\lVert \boldsymbol \mu - \boldsymbol {\hat \mu} \rVert_\infty \le \epsilon$$
    where $\boldsymbol \mu = (\mu_1, \mu_2, \dots), \boldsymbol {\hat \mu} = (\hat \mu_1, \hat \mu_2, \dots)$ with high probability and expected total evolution time
    $$T = \widetilde {\mathcal O}\left(\frac{M}{\epsilon}\right)$$
\end{thm}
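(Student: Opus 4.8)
The starting point is the algorithm of \cite{hu2025ansatzfreehamiltonianlearningheisenberglimited}, which I would re-analyze rather than reprove. It is naturally organized over scales $\epsilon_\ell = 2^{-\ell}$ for $\ell = 0,1,\dots,\lceil\log_2(1/\epsilon)\rceil$, maintaining after scale $\ell$ a hypothesis $\hat H_\ell=\sum_x\hat\mu_x^{(\ell)}P_x$ with $\|\boldsymbol\mu-\boldsymbol{\hat\mu}^{(\ell)}\|_\infty\lesssim\epsilon_\ell$ and support of size $O(M)$. The step $\hat H_\ell\to\hat H_{\ell+1}$ operates on the residual $G_\ell:=H-\hat H_\ell$, whose time evolution is synthesized from forward evolution under $H$ and precomputed forward evolution under $-\hat H_\ell$ by Trotterization (as in the sparsity-testing construction of this paper), at a cost governed by $\|G_\ell\|_2$. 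It uses two primitives: (a) Bell sampling on $e^{-iG_\ell t}$ at a short time $t$ to discover Pauli labels $x$ with $|\mu_x-\hat\mu_x^{(\ell)}|\gtrsim\epsilon_{\ell+1}$ not yet in the support; and (b) a Heisenberg-limited (robust phase/amplitude estimation style) subroutine that sharpens each discovered coefficient from precision $\epsilon_\ell$ to $\epsilon_{\ell+1}$. Terms never promoted past the threshold $\Theta(\epsilon)$ are dropped, so the working support stays $O(M)$ and the per-term precision target stays $\epsilon$.

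The easy half of the accounting is primitive (b): being Heisenberg-limited, it costs total evolution time $\widetilde{\mathcal{O}}(1/\epsilon)$ per coefficient (a geometric sum of evolution times $\sim 1/\epsilon_\ell$, times $\mathrm{polylog}$ repetitions for the failure probability), hence $\widetilde{\mathcal{O}}(M/\epsilon)$ over the $O(M)$ relevant terms — already matching the target. So the whole improvement must come from primitive (a), and here Theorem~\ref{thm:identity_bounds} is the new ingredient. Let $L\ge\|G_\ell\|_2$ be a spectral bound; for an $M$-sparse $H$ with $|\mu_x|\le1$ one may always take $L=O(M)$, or adaptively a sharper estimate once most of $H$ is learned. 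The previous analysis controls the non-identity Bell probability only in the regime where $e^{-iG_\ell t}=I-iG_\ell t+O(t^2)$ is accurate in operator norm, forcing a conservatively small $t$; since the number $N$ of Bell samples needed to witness a given $\sigma_x$ scales like $1/\big((\mu_x-\hat\mu_x^{(\ell)})^2t^2\big)$ and the associated evolution time like $Nt\propto 1/\big((\mu_x-\hat\mu_x^{(\ell)})^2t\big)$, this costs an extra factor of (essentially) $M$. Theorem~\ref{thm:identity_bounds} instead certifies, for any fixed $c\in(0,1/2)$ and $t=t^*(c)/(2L)=\Theta(1/L)$, the honest two-sided bound $2c\|G_\ell\|_F^2t^2\le\Pr[\lnot I]\le\|G_\ell\|_F^2t^2$ with no smallness assumption beyond $t\le t^*(c)/(2L)$, and likewise that a particular residual term is returned with probability $\Theta\big((\mu_x-\hat\mu_x^{(\ell)})^2t^2\big)$. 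Running discovery at this largest admissible evolution time shaves the extra factor; summing $Nt$ over the $O(M)$ terms and $O(\log(1/\epsilon))$ scales, with the union bound over $O(M)$ labels absorbed into $\mathrm{polylog}$ factors, then gives $\widetilde{\mathcal{O}}(M/\epsilon)$ for (a) as well.

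The genuinely delicate step — and the one I expect to be the main obstacle — is the amortized bookkeeping across scales and across the $O(M)$ discovery events. One must show that $\|G_\ell\|_2$ and $\|G_\ell\|_F$ decrease consistently enough that the adaptive choice $t\sim 1/\|G_\ell\|_2$ is legitimate and the geometric sums of $Nt$ actually telescope rather than accumulating another factor of $M$; that each Bell sample of the \emph{residual} (rather than of $H$ itself — which is exactly why subtracting $\hat H_\ell$ before sampling is essential) returns a not-yet-resolved label often enough to avoid a coupon-collector blowup; and that the random number of samples used in each round contributes only its expectation to $T$, so that the stated bound is necessarily in expectation. The last point follows from linearity of expectation over rounds together with a geometric tail on each round's sample count, exactly as in the Chernoff-type estimate already used in Theorem~\ref{thm:intol_empt}. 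Combining these pieces with primitive (b) yields an algorithm that outputs an $M$-sparse $\hat H$ with $\|\boldsymbol\mu-\boldsymbol{\hat\mu}\|_\infty\le\epsilon$ with high probability and expected total evolution time $\widetilde{\mathcal{O}}(M/\epsilon)$.
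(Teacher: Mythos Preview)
Your diagnosis of where the factor-of-$M$ savings comes from is inverted. In the original analysis of \cite{hu2025ansatzfreehamiltonianlearningheisenberglimited}, the structure-learning subroutine $\mathcal A^{I}$ (your primitive (a)) already costs $T_1=\widetilde{\mathcal O}(M/\epsilon)$; the bottleneck is $\mathcal A^{II}$ (your primitive (b)), with $T_2=\widetilde{\mathcal O}(M^2/\epsilon)$. The reason is precisely the step you dismissed as ``easy'': you assert that (b) runs over ``the $O(M)$ relevant terms'', but this begs the question. In the algorithm, $\mathcal A^{I}$ takes $\Theta(M^2\log M)$ Bell samples per scale, and each distinct non-identity outcome is a candidate label passed to $\mathcal A^{II}$; in the worst case this is $\Theta(M^2\log M)$ labels, and that is what drives $T_2=\widetilde{\mathcal O}(M^2/\epsilon)$. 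Your proposed fix --- pushing the Bell-sampling time $t$ up to $\Theta(1/L)$ in (a) --- changes nothing here, since the original algorithm already uses $t=\Theta(1/M)$, which is $\Theta(1/L)$ at the relevant scale; there is no conservatively small $t$ to enlarge.

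The paper's actual improvement is Lemma~\ref{lem:learning_sample}, and it uses the \emph{opposite} direction of Theorem~\ref{thm:identity_bounds} from the one you emphasize. You focus on the lower bound $\Pr[\lnot I]\ge 2c\|G_\ell\|_F^2t^2$ to guarantee each label is witnessed; the paper instead uses the lower bound on $\Pr[I]$, i.e.\ $\Pr[\lnot I]\le\|\tilde H_j\|_F^2t^2$, together with the observation $\|\tilde H_j\|_F\le\sqrt{M}$, to show that out of $\Theta(M^2\log M)$ samples only $O(M\log M)$ are non-identity (with a Chernoff tail). Hence only $O(M\log M)$ labels go to $\mathcal A^{II}$, and $T_2$ drops to $\widetilde{\mathcal O}(M/\epsilon)$. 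Note also that Theorem~\ref{thm:identity_bounds} only controls $\Pr[I]$, not individual $\Pr[\sigma_x]$, so your claim that it ``certifies\ldots a particular residual term is returned with probability $\Theta((\mu_x-\hat\mu_x^{(\ell)})^2t^2)$'' is not something the theorem provides.
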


Here, we use $\widetilde {\mathcal O}(\cdot)$ to omit logarithmic factors for brevity. Compared to the result in \cite{hu2025ansatzfreehamiltonianlearningheisenberglimited}, we present an improved analysis which removes a factor of $M$ from the average evolution time required, giving us the result shown above.

This learning algorithm consists of two subroutines: a protocol to learn the support of the Hamiltonian (denoted $\mathcal A^{I}$) and a protocol to learn the coefficients of Pauli operators in the support of the Hamiltonian (denoted $\mathcal A^{II}$). This algorithm works in hierarchical fashion, grouping Pauli coefficients into buckets $\mathfrak S_j$ defined as follows for $j = 0, \dots, \lceil \log_2(1/\epsilon)\rceil - 1$:
$$\mathfrak S_j = \{x : 2^{-(j + 1)} < |\mu_x| \le 2^{-j}\}$$
In particular, for each $\mathfrak S_j$, we apply $\mathcal A^I$ and $\mathcal A^{II}$ to the Hamiltonian $\tilde H_j = \frac{H - \hat H_j}{2^{-j}}$, where $\hat H_j$ denotes the Hamiltonian learned up to this point. By using this Hamiltonian, we kill off the terms we have already learned in $H$ and boost the remaining terms by a factor of $2^{j}$. For $x \in \mathfrak S_j$, we have that $2^{-(j + 1)} < |\mu_x| \le 2^{-j}$, meaning that $2^j|\mu_x|$ is of constant order and can be learned easily from $\mathcal A^I$ and $\mathcal A^{II}$.

To learn the support of the Hamiltonian in $\mathcal A^I$, we simply use Bell sampling to determine the Pauli operators we can sample. In particular, we time-evolve for $t = \mathcal O\left(\frac{1}{CM}\right)$, which for any $x \in \mathfrak S_j$ gives us a sampling probability of
$$\gamma_j = \tilde\Omega\left(\frac{1}{4C^2M^2} - \frac{1}{C^3M^2}\right),$$
as stated in Appendix C of \cite{hu2025ansatzfreehamiltonianlearningheisenberglimited}. In order to sample all the indices in $\mathfrak S_j$, we then need to sample
$$\mathcal O\left(\frac{\log(M/\delta)}{\gamma_j}\right) = \mathcal O(M^2 \log (M/\delta))$$
times. Consequently, in the maximal case, we will sample $\mathcal O(M^2 \log M)$ different Pauli operators that could be in the support of $H$. Altogether, this subroutine is given as follows.

\begin{algorithm}[H]
\caption{Subroutine: Structure Learning Algorithm $\mathcal A^I$ \cite{hu2025ansatzfreehamiltonianlearningheisenberglimited}}\label{alg:intol_sparse_learn_i}
\begin{algorithmic}[1]
\REQUIRE Query access to $U(t) = e^{-iHt}$, approximate learned Hamiltonian $\hat H$, iteration $j$, sparsity $M$, error rate $\delta$
\STATE Initialize $\mathcal B \leftarrow \varnothing$
\STATE Set evolution time $\tau \leftarrow \mathcal O\left(\frac{2^j}{M}\right)$
\STATE Set trotterization time steps $r_1 \leftarrow \mathcal O\left(2^{2j}M^2\right)$
\STATE Set the total number of measurements $\mathfrak M = \mathcal O(M^2\log (M/\delta))$
\FOR{$i = 1, \dots, \mathfrak M$}
    \STATE Perform Bell-sampling on $(e^{i\hat H\tau/r_1}U(\tau/r_1))^{r_1}$ to get outcome $b_i$
    \STATE If $P \ne I$, set $\mathcal B \leftarrow \mathcal B \cup \{b_i\}$
\ENDFOR
\RETURN $\mathcal B$
\ENSURE Obtain estimate $\mathcal B$ of the support set $\mathfrak S_j$
\end{algorithmic}
\end{algorithm}

Once we have confidently sampled all the indices in $\mathfrak S_j$ (potentially oversampling in the process), we then learn the coefficients with $\mathcal A^{II}$. To do this, we first use Hamiltonian reshaping \cite{HuangTongFangSu2023learning,ma2024learningkbodyhamiltonianscompressed} in order to isolate the Pauli in the Hamiltonian whose coefficient we aim to learn. In particular, suppose we aim to learn Pauli $P_s$. Define $\mathcal K_{P_s} = \{P \in \mathbb P_n: [P_s, P] = 0\}$. If we now perform Pauli twirling using Pauli operators in $\mathcal K_{P_s}$, using the fact that for any Pauli $P \ne P_s,$ $P$ commutes with half of the Pauli operators in $\mathcal K_{P_s}$ and anticommutes with rest, we get the following single-step quantum channel:
$$\rho \mapsto \frac{1}{2^{2n - 1}}\sum_{Q \in \mathcal K_{P_s}} Qe^{-iHt}Q\rho Qe^{iHt}Q = \rho - i\tau [H_\text{eff}, \rho] + \mathcal O(\tau^2)$$
where the effective Hamiltonian is given as
$$H_\text{eff} = \frac{1}{2^{2n - 1}}\sum_{Q \in \mathcal K_{P_s}} QHQ = P_s$$

Consequently, with this procedure, we effectively time-evolve with respect to a single term $P_s$ in the Hamiltonian. By applying this randomized reshaping channel over $r_2$ small time-steps, we can approximate evolution under $e^{-i\mu_sP_st}$ with diamond norm error scaling $\mathcal O(M^2t^2/r_2)$ \cite{hu2025ansatzfreehamiltonianlearningheisenberglimited}. The coefficient $\mu_s$ is then estimated using robust frequency estimation \cite{ma2024learningkbodyhamiltonianscompressed, PhysRevA.92.062315}. In particular, if we want to learn the coefficient of $P_s = \otimes_{k = 1}^n\sigma_{\beta_{s,k}}$, we first prepare the state
$$|\phi_0^s\rangle = \frac{1}{\sqrt 2}(|1, \beta_s\rangle + |-1, \beta_s\rangle) \quad \left(|\pm 1, \beta_s\rangle = \bigotimes_{k = 1}^n|\pm1, \beta_{s, k}\rangle\right)$$
and $|\pm 1, \beta_{s, k}\rangle$ is the $\pm 1$ eigenstate of $\sigma_{\beta_{s, k}}$. We then evolve this state using Hamiltoniann reshaping to get
$$|\phi_t^s\rangle = \left(\prod_{i = 1} ^{r_2}Q_iU(\tau/r_2)Q_i\right)|\phi_0^s\rangle = \frac{1}{\sqrt 2}(e^{-i\mu_st}|1, \beta_s\rangle + e^{i\mu_st}|-1, \beta_s\rangle)$$
We define the following observables:
\begin{align*}
    O_s^+ &= \left( \bigotimes_{j=1}^{s^\star - 1} \left|1, \beta_{s,j}\right\rangle \right) \otimes Q_{s,s^\star}^+ \otimes \left( \bigotimes_{j=s^\star + 1}^n \left|1, \beta_{s,j}\right\rangle \right), \\
    O_s^- &= \left( \bigotimes_{j=1}^{s^\star - 1} \left|1, \beta_{s,j}\right\rangle \right) \otimes Q_{s,s^\star}^- \otimes \left( \bigotimes_{j=s^\star + 1}^n \left|1, \beta_{s,j}\right\rangle \right),
\end{align*}
where $Q_{s,s^\star}^+$ and $Q_{s,s^\star}^-$ are chosen to be single-qubit Pauli operators such that
\begin{align*}
    Q_{s,s^\star}^+ \left|1, \beta_{s,s^\star}\right\rangle &= \left|-1, \beta_{s,s^\star}\right\rangle, & Q_{s,s^\star}^+ \left|-1, \beta_{s,s^\star}\right\rangle &= \left|1, \beta_{s,s^\star}\right\rangle; \\
    Q_{s,s^\star}^- \left|1, \beta_{s,s^\star}\right\rangle &= i\left|-1, \beta_{s,s^\star}\right\rangle, & Q_{s,s^\star}^- \left|-1, \beta_{s,s^\star}\right\rangle &= -i\left|1, \beta_{s,s^\star}\right\rangle.
\end{align*}
We then measure either $O_s^+$ or $O_s^-$ by measuring $s^\star$th qubit, repeating this procedure until we can estimate
$$S = \langle O_s^+\rangle + i\langle O_s^-\rangle = e^{-i\frac{2\mu_s\pi}{b-a}}$$
to a certain constant precision that is independent of the problem (independent of $\epsilon$ in particular).
We then accordingly update our definition of $a, b$ in order to compute $\hat \mu_s$, as shown in the algorithm below.

\begin{algorithm}[H]
\caption{Subroutine: Coefficient Learning Algorithm $\mathcal A^{II}$ \cite{hu2025ansatzfreehamiltonianlearningheisenberglimited}}\label{alg:intol_sparse_learn_ii}
\begin{algorithmic}[1]
\REQUIRE Query access to $U(t) = e^{-iHt}$, Pauli $P_s$, learning accuracy $\epsilon$, sparsity $M$
\STATE Set $a \leftarrow -\pi, b \leftarrow \pi$
\STATE Set $N_\text{exp} = \widetilde{\mathcal O}(1)$
\FOR{$\ell = 1, \dots, \log_{3/2}(2\pi/\epsilon)$}
    \STATE Prepare the initial state $|\phi_0^s\rangle$
    \STATE Set evolution time $\tau \leftarrow \frac{\pi}{2(b - a)}$
    \STATE Set Trotterization steps $r_2 \leftarrow \mathcal O(M^2\tau^2)$
    \STATE Evolve $|\phi_0^t\rangle = \left(\prod_{i = 1} ^{r_2}Q_iU(\tau/r_2)Q_i\right)|\phi_0^s\rangle$, where $Q_i \in \mathcal K_{P_s}$ are sampled uniformly
    \STATE Measure either $O_s^+$ or $O_s^-$ on $|\psi_-'\rangle$ each $N_\text{exp}$ times and compute estimate $\hat S = \langle O_s^+\rangle + i\langle O_s^-\rangle$ of $\exp\left(-i\frac{2\mu_s \pi}{b - a}\right)$
    \IF{$\text{Im}\left[\exp\left(-i\frac{(a + b)\pi}{b - a}\hat S\right)\right] \le 0$}
        \STATE $b \leftarrow (a + 2b)/3$
    \ELSE
        \STATE $a \leftarrow (2a + b)/3$
    \ENDIF
\ENDFOR
\RETURN $\hat \mu_s = (b - a)/4$
\ENSURE Estimate $\hat \mu_s$ of coefficient $\mu_s$ for Pauli $P_s$ where $|\hat \mu_s - \mu_s| \le \epsilon$
\end{algorithmic}
\end{algorithm}

We then combine these two subroutines in hierarhical fashion, learning the support and then the coefficients for Pauli operators with coefficients within certain buckets in order to achieve Heisenberg-limited total evolution time.

\begin{algorithm}[H]
\caption{Intolerant Sparsity Learning \cite{hu2025ansatzfreehamiltonianlearningheisenberglimited}}\label{alg:intol_sparse_learn}
\begin{algorithmic}[1]
\REQUIRE Query access to $U(t) = e^{-iHt}$, sparsity $M$, learning accuracy $\epsilon$, error rate $\delta$
\STATE Initialize all learned Pauli coefficients $\hat \mu_x = 0$ and define $\hat H = \sum_{x} \hat \mu_xP_x$
\FOR{$j = 0, \dots, \lceil \log_2(1/\epsilon)\rceil - 1$}
    \STATE Run Algorithm \ref{alg:intol_sparse_learn_i} on $U(t), \hat H, j, M, \delta$ and store output $\to \mathcal B$
    \FOR{$s \in \mathcal B$}
        \STATE Run Algorithm \ref{alg:intol_sparse_learn_ii} on $U(t), P_s, \epsilon, M$ and store output $\to \hat \mu_s$
    \ENDFOR
\ENDFOR
\RETURN $\hat H$
\ENSURE $M$-sparse approximation $\hat H = \sum_{x} \hat \mu_x P_x$ of $H$ where $\lVert \hat {\boldsymbol \mu} - \boldsymbol \mu \rVert_\infty \le \epsilon$ with high probability if $H$ is $M$-sparse.
\end{algorithmic}
\end{algorithm}

In the original paper \cite{hu2025ansatzfreehamiltonianlearningheisenberglimited}, the total evolution times $T_1$ for $\mathcal A^I$ and $T_2$ for $\mathcal A^{II}$ are given as follows:
$$T_1 = \widetilde{\mathcal O}(M/\epsilon), \quad T_2 = \widetilde{\mathcal O}(M^2/\epsilon)$$
Consequently, the runtime is dominated by $T_2$. However, this expression is a worst-case scenario in which we obtain $\mathcal O\left(M^2\log M\right)$ unique Pauli operators whose coefficients we have to learn from $\mathcal A^{I}$. We improve upon this analysis by showing that we only sample $\mathcal O(M \log M)$ non-identity Pauli operators in each iteration of $\mathcal A^I$.

\begin{lem}[Non-Identity Sampling Complexity]\label{lem:learning_sample}
    For the Hamiltonian $\tilde H_j = \frac{H - \hat H_j}{2^{-j}}$ we time-evolve with respect to in each round of the support learning algorithm $\mathcal A^I$, performing Bell-sampling $\Theta(M^2 \log M)$ times yields $\le \Theta(\alpha M \log M)$ non-identity Pauli operators w.p. $\ge 1 - e^{-\Theta(\alpha M \log M)}$ for sufficiently large $\alpha > 1$.
\end{lem}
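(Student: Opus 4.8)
\section*{Proof proposal for Lemma~\ref{lem:learning_sample}}

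The plan is to bound the probability that a single Bell sample in round $j$ returns a non-identity Pauli, and then apply a Chernoff bound over the $\mathfrak M=\Theta(M^2\log M)$ samples. First I would observe that the circuit $\big(e^{i\hat H\tau/r_1}U(\tau/r_1)\big)^{r_1}$ used in $\mathcal A^{I}$, with $\tau=\mathcal O(2^j/M)$ and $r_1=\mathcal O(2^{2j}M^2)$, is a first-order Trotter approximation of $e^{-i(H-\hat H_j)\tau}=e^{-i\tilde H_j t}$ with effective evolution time $t=2^{-j}\tau=\Theta(1/M)$ on $\tilde H_j$; the standard Trotter estimate bounds the operator-norm error by $\mathcal O(\tau^2\lVert H\rVert_2\lVert\hat H_j\rVert_2/r_1)=\widetilde{\mathcal O}(1/M^2)$ (using $\lVert H\rVert_2\le M$ and $\lVert\hat H_j\rVert_2=\widetilde{\mathcal O}(M)$), so the induced Bell-sampling distribution is within total-variation distance $\widetilde{\mathcal O}(1/M^2)$ of the one for $e^{-i\tilde H_j t}$. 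Applying the lower bound of Theorem~\ref{thm:identity_bounds} to $\tilde H_j$ (which holds for this $t$ since $\lVert\tilde H_j\rVert_2=\widetilde{\mathcal O}(M)$ and the constant hidden in $\tau$ may be chosen suitably small), the per-sample non-identity probability satisfies
\[
p\;\le\;\lVert\tilde H_j\rVert_F^2\,t^2+\widetilde{\mathcal O}(1/M^2).
\]

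The crux is to show $\lVert\tilde H_j\rVert_F^2=\widetilde{\mathcal O}(M)$, which I would establish by induction on $j$. The base case $j=0$ is immediate: $\hat H_0=0$, so $\tilde H_0=H$ and $\lVert H\rVert_F^2=\sum_x|\mu_x|^2\le M$ by $M$-sparsity and $|\mu_x|\le 1$. For the inductive step, the Pauli coefficients of $\tilde H_j=2^j(H-\hat H_j)$ are $2^j(\mu_x-\hat\mu_x)$, and each has magnitude $\mathcal O(1)$: if $x$ is still unprocessed then $\hat\mu_x=0$ and $|\mu_x|\le 2^{-j}$, since all buckets $\mathfrak S_{j'}$ with $j'<j$ have been correctly identified and removed (invoking the inductive hypothesis and the success guarantee of $\mathcal A^{I}$); if $x$ was processed in a round $j'<j$ then $|\mu_x-\hat\mu_x|\le\epsilon\le 2^{-j}$ by the accuracy guarantee of $\mathcal A^{II}$. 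The number of nonzero coefficients of $\tilde H_j$ is at most $|\operatorname{supp}(H)|+|\operatorname{supp}(\hat H_j)|\le M+\sum_{j'<j}|\mathcal B_{j'}|$, and the inductive hypothesis bounds each $|\mathcal B_{j'}|$ by $\widetilde{\mathcal O}(M)$; summing over the $\mathcal O(\log(1/\epsilon))$ rounds keeps $|\operatorname{supp}(\hat H_j)|=\widetilde{\mathcal O}(M)$. Hence $\lVert\tilde H_j\rVert_F^2\le\big(M+|\operatorname{supp}(\hat H_j)|\big)\cdot\mathcal O(1)=\widetilde{\mathcal O}(M)$, and therefore $p=\widetilde{\mathcal O}(1/M)$.

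It then remains to conclude with concentration. The number $Z$ of non-identity outcomes among the $\mathfrak M=\Theta(M^2\log M)$ samples is a sum of independent Bernoulli random variables with $\mathbb E[Z]=\mathfrak M\cdot p=\widetilde{\mathcal O}(M\log M)$. By the multiplicative Chernoff bound, $\Pr[Z\ge\alpha\,\mathbb E[Z]]\le e^{-(\alpha-1)\mathbb E[Z]/3}$ for every $\alpha\ge 2$; that is, $Z=\mathcal O(\alpha M\log M)$ except with probability $e^{-\Theta(\alpha M\log M)}$, which is exactly the claimed bound, the suppressed polylog factors being absorbed into the $\widetilde{\mathcal O}$ in accordance with the conventions used throughout.

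I expect the Frobenius-norm estimate of the second paragraph to be the main obstacle, precisely because of the mild circularity it contains: the bound on $\lVert\tilde H_j\rVert_F$ needs control on $|\operatorname{supp}(\hat H_j)|$, which in turn is what the lemma itself bounds for earlier rounds. The induction on $j$ is what breaks the loop, but one must check carefully that the $\log(1/\epsilon)$ factor accumulated by summing the per-round supports over all rounds does not inflate $p$ beyond $\widetilde{\mathcal O}(1/M)$; concretely, one has to verify that already-learned Paulis, and true-support Paulis lying in buckets $\mathfrak S_{j''}$ with $j''>j$, are re-sampled only $\widetilde{\mathcal O}(M\log M)$ times per round, which is where the geometric decay of the sampling probabilities $\big(2^{j}\mu_x\big)^2t^2$ across buckets must be invoked.
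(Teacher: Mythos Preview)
Your outline coincides with the paper's: bound the per-sample non-identity probability by $\lVert\tilde H_j\rVert_F^2 t^2$ via the lower bound in Theorem~\ref{thm:identity_bounds}, and then apply a multiplicative Chernoff bound to the $\Theta(M^2\log M)$ Bernoulli trials. The divergence is entirely in how $\lVert\tilde H_j\rVert_F$ is controlled, and here the paper is considerably simpler. It does not track $|\operatorname{supp}(\hat H_j)|$, does not use induction, and does not address the Trotter perturbation. Instead it splits the indices into $S_0\setminus S_j=\{x:|\mu_x|>2^{-j}\}$ and $S_j=\{x:|\mu_x|\le 2^{-j}\}$, uses $|\mu_x-\hat\mu_x|\le\epsilon$ on the first set and $\hat\mu_x=0$ on the second, and concludes
\[
\lVert\tilde H_j\rVert_F \;\le\; 2^j\sqrt{\epsilon^2\,|S_0\setminus S_j|\;+\;2^{-2j}\,|S_j\cap\operatorname{supp}(H)|}\;\le\;\sqrt{M}
\]
directly from $|S_0\setminus S_j|,|S_j\cap\operatorname{supp}(H)|\le M$ and $2^j\epsilon\le 1$. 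This gives $p\le\Theta(1/M)$ with no polylog loss and no dependence on earlier rounds.

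The circularity you flag in your last paragraph is a genuine obstruction to your route, not a cosmetic one. If you feed ``$|\mathcal B_{j'}|\le K\alpha M\log M$ for all $j'<j$'' into $\lVert\tilde H_j\rVert_F^2\le M+4^j\epsilon^2\sum_{j'<j}|\mathcal B_{j'}|$, the induction closes only if $K\ge C(1+JK\alpha\log M)$ for a fixed constant $C$ and $J=\Theta(\log(1/\epsilon))$; since $CJ\alpha\log M>1$ in the regime of interest, no finite $K$ works, and the recursion inflates to $M^{\Theta(\alpha)}$ rather than $\widetilde{\mathcal O}(M)$. The paper avoids this by taking $\hat\mu_x=0$ for every $x\in S_j$, i.e.\ by tacitly assuming that no Pauli with $|\mu_x|\le 2^{-j}$ (in particular no spurious Pauli outside $\operatorname{supp}(H)$) has been assigned a nonzero estimate in earlier rounds. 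To make your more careful accounting go through you would need to justify that assumption separately (for instance by thresholding $\hat\mu_s\to 0$ whenever $|\hat\mu_s|$ falls below the current bucket scale), rather than by bounding $|\operatorname{supp}(\hat H_j)|$ inductively.
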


\begin{proof}
Using the lower bound from Theorem \ref{thm:identity_bounds}, we have that
$$\Pr[I] \ge 1 - \lVert H \rVert_F^2t^2$$
We define $S_j:=\{x:|\mu_x|\leq 2^{-j}\}$. Computing an upper bound on the normalized Frobenius norm of $\tilde H_j$, we have the following:
$$\lVert \tilde H_j \rVert_F = 2^j\sqrt{\sum_{x \in S}(\mu_x - \hat \mu_x)^2} = 2^j\sqrt{\sum_{x \in S_0\setminus S_j}(\mu_x - \hat \mu_x)^2 + \sum_{x \in S_j}\mu_x^2} \le 2^j\sqrt{\varepsilon^2 (|S_0| - |S_j|) + 2^{-2j}|S_j|} \le \sqrt{M}$$
The last inequality follows from the fact that $2^{-j} \ge \epsilon$ since $j \le \lceil \log_2(1/\epsilon) \rceil - 1$. Therefore, choosing $t = \Theta\left(\frac1M\right)$, we have that
$$\Pr[I] \ge 1 - \lVert \tilde H_j \rVert_F^2t^2 \ge 1 - \Theta\left(\frac{1}{M}\right)$$
In other words, the probability of sampling a non-identity element is bounded above as
$$\Pr[\lnot I] \le \Theta\left(\frac{1}{M}\right)$$
Consequently, since we are sampling $N = \Theta(M^2 \log M)$ times, the expected number of non-identity Pauli operators we sample is bounded above by $\Theta(M \log M)$.

We now derive a concentration inequality  for the probability of sampling more than $\mathcal \Theta(\alpha M \log M)$ for some constant $\alpha > 1$. Let $X_i \sim \text{Bernoulli}(\Theta(1/M))$ denote a given sampling event where $X_i = 1$ denotes a non-identity Pauli and $X_i = 0$ denotes an identity Pauli. Consequently, we can define the random variable corresponding to the total number of non-identity Pauli operators sampled as
$$X = \sum_{i = 1} ^{N} X_i$$
By the multiplicative Chernoff bound, we have that
$$\Pr[X \ge \alpha \mathbb E[X]] \le \exp\left(-\frac{(\alpha - 1)^2\mathbb E[X]}{1 + \alpha}\right) \le \exp\left(-\Theta(\alpha \mathbb E[X])\right),$$
where $\mathbb E[X] = N/M = \Theta(M \log M)$.

Altogether, this means that we sample less than $\Theta(\alpha M \log M)$ non-identity Pauli operators with probability
$$\Pr[X \le \alpha \cdot \Theta(M \log M)] \ge 1 - e^{-\Theta(\alpha M \log M)}$$

\end{proof}

Since the probability of sampling more than $\Theta(\alpha M \log M)$ Pauli operators decays exponentially in $\alpha$, on average, we have that the total-evolution time required for learning step and therefore the total learning protocol is given as
$$T = \widetilde{\mathcal O}\left(\frac{M}{\epsilon}\right)$$
Consequently, with Lemma \ref{lem:learning_sample}, we improve the average runtime of the algorithm from $\widetilde{\mathcal O}(M^2/\epsilon)$ to $\widetilde{\mathcal O}(M/\epsilon)$, as stated in Theorem \ref{thm:intol_learn}.

% \subsection{Tolerant Learning}

% For the tolerant learning setting, we loosen the $M$-sparse requirement for learning, meaning that we instead simply require the Hamiltonian to be $\epsilon$-close to 

% \begin{prob}[Tolerant Sparsity Learning]
%     Given an a Hamiltonian $H = H' + H_\Delta$ with unknown Pauli coefficients, where $H'$ is the closest $M$-sparse approximation to $H$, and $\lVert H_\Delta \rVert_F < \epsilon_1$, learn a $M$-sparse approximation $\hat H = \sum_x \hat \mu_x \sigma_x$ such that $\lVert H - \hat H \rVert_F \le \epsilon_2$, where $\epsilon_1 < \epsilon_2$.
% \end{prob}

% \begin{proof}
%     Between the intolerant and tolerant testing protocols, we must now recalculate the probability of sampling elements outside the desired support of $H'$. In particular, we can upper bound the probability of sampling a Pauli in the support of $H_\Delta$ as
%     $$\Pr[P \in \text{supp}(H_\Delta)] \le \mathcal O\left(\frac{\epsilon_1^2 t^2}{2^{-2j}}\right) \le \mathcal O\left(\frac{\epsilon_1^2}{C^2M^2\epsilon_2^2}\right)$$
%     Consequently, since we are sampling $\mathcal O(M^2 \log M)$ times, we will only sample $\mathcal O(\log M)$ elements from the support of $H_\Delta$
% \end{proof}

\section{Sparsity Testing}

\begin{prob}[Intolerant Sparsity Testing]
    Given Hamiltonian $H = H' + H_\Delta$ with bounded spectral norm $\lVert H \rVert_2 \le L$ for $L > 0$, where $H'$ is the closest $M$-sparse approximation to $H$ in the Frobenius norm, determine if $H$ is exactly $M$-sparse ($H_\Delta = 0$) or $\epsilon$-far from $M$-sparse ($\lVert H_\Delta \rVert_F \ge \epsilon$).
\end{prob}

\begin{lem}\label{lem:trotter}
    Suppose we are given Hamiltonians $H, \hat H$. Define the a single step of the Trotterized time-evolution operator as
    $$U(t) = e^{i\hat Ht}e^{-iHt}$$
    We consider a bipartite system initialized in the maximally entangled state $\ket{\Phi}$.
    If we have to time-evolve a given Hamiltonian for $r$ time-steps for total time $T$ in one part of the bipartite system, we can bound the total additive Trotter error between $U(T/r)^r$ and $e^{-i(H - \hat H)T}$ as
    $$\varepsilon_\text{Trotter} :=  \left\lVert ((U(t)^r \otimes I) - (e^{-i(H - \hat H)rt} \otimes I))|\Phi\rangle \right\rVert \le \frac{T^2\lVert \hat H \rVert_F\lVert H \rVert_2}{r},$$
    where $t = T/r$ is the duration of the time-step.
\end{lem}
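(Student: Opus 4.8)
The plan is to combine a telescoping estimate over the $r$ Trotter steps with a sharp one-step bound, using two properties of the normalized Frobenius norm: that $\lVert (A\otimes I_{2^n})|\Phi\rangle\rVert = \lVert A\rVert_F$ for \emph{any} operator $A$ (Fact~\ref{fact:bell_norm}; its proof only invokes a Pauli decomposition, so it also covers the non-Hermitian operators that appear below), and that multiplication on either side by a unitary preserves it, $\lVert W A W'\rVert_F = \lVert A\rVert_F$.

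First I would set $V(t) := e^{-i(H-\hat H)t}$ and use the telescoping identity
$$U(t)^r - V(t)^r = \sum_{k=0}^{r-1} U(t)^k\,\bigl(U(t)-V(t)\bigr)\,V(t)^{r-1-k}.$$
Since $U(t)$ and $V(t)$ are unitary, tensoring with $I_{2^n}$, acting on $|\Phi\rangle$, and applying the triangle inequality together with Fact~\ref{fact:bell_norm} and unitary invariance gives
$$\varepsilon_\text{Trotter} \le \sum_{k=0}^{r-1}\bigl\lVert U(t)^k(U(t)-V(t))V(t)^{r-1-k}\bigr\rVert_F = r\,\lVert U(t)-V(t)\rVert_F.$$
So it suffices to prove the one-step bound $\lVert e^{i\hat H t}e^{-iHt}-e^{-i(H-\hat H)t}\rVert_F \le t^2\lVert \hat H\rVert_F\lVert H\rVert_2$; substituting $t = T/r$ then yields the claimed $T^2\lVert\hat H\rVert_F\lVert H\rVert_2/r$.

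For the one-step bound I would use a Duhamel identity with a well-chosen interpolation: differentiating $s\mapsto e^{i(\hat H-H)(t-s)}e^{i\hat H s}e^{-iHs}$ and integrating over $[0,t]$ yields
$$e^{i\hat H t}e^{-iHt}-e^{-i(H-\hat H)t} = i\int_0^t e^{i(\hat H-H)(t-s)}\,\bigl(H-e^{i\hat H s}He^{-i\hat H s}\bigr)\,e^{i\hat H s}e^{-iHs}\,\mathrm{d}s,$$
where every factor other than the parenthesized one is unitary. Taking $\lVert\cdot\rVert_F$ and using unitary invariance leaves $\lVert e^{i\hat H t}e^{-iHt}-e^{-i(H-\hat H)t}\rVert_F \le \int_0^t \lVert H-e^{i\hat H s}He^{-i\hat H s}\rVert_F\,\mathrm{d}s$. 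A second application of the fundamental theorem of calculus, $e^{i\hat H s}He^{-i\hat H s}-H = i\int_0^s e^{i\hat H\tau}[\hat H,H]e^{-i\hat H\tau}\,\mathrm{d}\tau$, bounds the integrand by $s\lVert[\hat H,H]\rVert_F$, and submultiplicativity ($\lVert AB\rVert_F\le\lVert A\rVert_F\lVert B\rVert_2$ and $\lVert AB\rVert_F\le\lVert A\rVert_2\lVert B\rVert_F$) gives $\lVert[\hat H,H]\rVert_F\le 2\lVert\hat H\rVert_F\lVert H\rVert_2$. Hence $\lVert U(t)-V(t)\rVert_F\le\int_0^t 2s\lVert\hat H\rVert_F\lVert H\rVert_2\,\mathrm{d}s = t^2\lVert\hat H\rVert_F\lVert H\rVert_2$, as needed.

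The main obstacle is this one-step estimate. A naive Taylor expansion of $e^{i\hat H t}e^{-iHt}$ and of $e^{-i(H-\hat H)t}$ does reproduce the correct leading term $\tfrac{t^2}{2}[\hat H,H]$, but it leaves $O(t^3)$ remainders whose Frobenius norm one would have to control using quantities such as $\lVert\hat H\rVert_2$ that do not appear in the statement. The Duhamel identity above is precisely what avoids this: since all factors except $H-e^{i\hat H s}He^{-i\hat H s}$ are unitary, the normalized Frobenius norm passes through exactly, and the commutator structure — which is what lets us trade the missing $\lVert\hat H\rVert_2$ for $\lVert\hat H\rVert_F$ via $\lVert AB\rVert_F\le\lVert A\rVert_2\lVert B\rVert_F$ — emerges automatically. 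A minor point worth checking is that Fact~\ref{fact:bell_norm} is applied to non-Hermitian operators in the telescoping step, which is legitimate because every operator admits a Pauli decomposition.
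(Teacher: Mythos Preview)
Your proposal is correct and follows essentially the same approach as the paper: both reduce to $r\lVert U(t)-e^{-i(H-\hat H)t}\rVert_F$ via telescoping, then prove the one-step bound $t^2\lVert\hat H\rVert_F\lVert H\rVert_2$ using the same Duhamel integral (the paper cites it from \cite{PhysRevX.11.011020} and writes the integrand as $e^{-i(t-\tau)(H-\hat H)}[e^{i\tau\hat H},-iH]e^{-i\tau H}$, which is identical to yours after factoring out $e^{-i\hat H s}$). The only cosmetic difference is how the commutator is bounded---the paper writes $\lVert[e^{i\tau\hat H},H]\rVert_F=\lVert[e^{i\tau\hat H}-I,H]\rVert_F\le 2\lVert e^{i\tau\hat H}-I\rVert_F\lVert H\rVert_2\le 2\tau\lVert\hat H\rVert_F\lVert H\rVert_2$, while you pass through $s\lVert[\hat H,H]\rVert_F$---but these are the same estimate.
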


\begin{proof}
    % We want to evaluate the Trotter error defined as follows:
    % $$\varepsilon_\text{Trotter} = \left\lVert ((U(t)^r \otimes I) - (e^{-i(H - \hat H)rt} \otimes I))|\Phi\rangle \right\rVert$$
    % where $t = T/r$ is the duration of the time-step. 
    We can observe that
    \begin{align*}
        \varepsilon_\text{Trotter} &\le \lVert ((U(t)^r -U(t)e^{-i(H - \hat H)(r - 1)t}) \otimes I)|\Phi\rangle \rVert + \lVert ((U(t)e^{-i(H - \hat H)(r - 1)t} - e^{-i(H - \hat H)rt}) \otimes I)|\Phi\rangle \rVert \\
        &= \lVert ((U(t)^{r - 1} - e^{-i(H - \hat H)(r - 1)t}) \otimes I)|\Phi\rangle \rVert + \lVert U(t) - e^{-i(H - \hat H)t}\rVert_F
    \end{align*}
    The RHS is simplified using Fact \ref{fact:bell_norm} concerning the maximally entangled state and the invariance of the Frobenius norm under unitary operations. We can observe that the first term is identical to the LHS with the substitution $r \to r - 1$, giving us a recursive bound on the norm. Since the offset term is independent of $r$, we have that
    $$\varepsilon_\text{Trotter} \le r\lVert U(t) - e^{-i(H - \hat H)t}\rVert_F$$
    
    The expression in the Frobenius norm is simply the additive error in the Lie-Trotter Formula, which from \cite{PhysRevX.11.011020} is given as
    $$\mathscr A(t) = U(t) - e^{-i(H - \hat H)t} = \int_0 ^t d\tau \; e^{-i(t - \tau)(H - \hat H)}[e^{i\tau \hat H}, -iH]e^{-i\tau H}$$
    Computing the error, we have that since we are evolving against a Bell state, this error is given as follows:
    $$\lVert \mathscr A(t) \rVert_F \le \int_0 ^t d\tau \; \lVert e^{-i(t - \tau)(H - \hat H)}[e^{i\tau \hat H}, -iH]e^{-i\tau H} \rVert_F = \int_0 ^t d\tau \; \lVert [e^{i\tau \hat H}, -iH] \rVert_F$$
    Since the left and right operators are unitary, they leave the Frobenius norm invariant and can be ignored. To bound the norm of the commutator, we have the following:
    $$\lVert [e^{i\tau \hat H}, H] \rVert_F = \lVert [e^{i\tau \hat H} - I, H]\rVert_F \le 2\lVert e^{i\tau \hat H} - I\rVert_F\lVert H \rVert_2$$

    Using the Taylor integral remainder formula, we get the following:
    $$\lVert e^{i\tau \hat H} - I \rVert_F = \left\lVert \int_{0} ^\tau dt' \; i\hat He^{it' \hat H}\right\rVert_F \le \int_{0} ^\tau dt' \; \lVert\hat H\rVert_F \left\lVert e^{it' \hat H}\right\rVert_2 \le \tau \lVert \hat H \rVert_F$$
    Consequently, this gives us that
    $$\lVert [e^{i\tau \hat H}, H] \rVert_F \le 2\tau\lVert \hat H \rVert_F\lVert H \rVert_2$$

    Therefore, we have that
    $$\lVert \mathscr A(t) \rVert_F \le \int_0 ^t d\tau \; 2\tau\lVert \hat H \rVert_F\lVert H \rVert_2 = t^2\lVert \hat H \rVert_F\lVert H \rVert_2$$

    Substituting this back into the bound for $\varepsilon_\text{Trotter}$, we have that
    $$\varepsilon_\text{Trotter} \le rt^2\lVert \hat H \rVert_F\lVert H \rVert_2 = \frac{T^2\lVert \hat H \rVert_F\lVert H \rVert_2}{r}$$
\end{proof}

\begin{thm}\label{thm:intol_sparse}
    There exists an algorithm that decides the intolerant $M$-sparsity testing problem for $H$ (where $\lVert H \rVert_2 \le L$) with probability of success $\ge 1 - \delta$ for a given threshold $\epsilon$ with total evolution time
    $$T = \widetilde {\mathcal O}\left(\frac{LM^{1.5}}{\epsilon} + \frac{LM}{\epsilon^2}\right)$$
\end{thm}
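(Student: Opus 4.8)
The plan is to combine the sparse Hamiltonian learning algorithm of Theorem~\ref{thm:intol_learn} with the emptiness testing machinery of Section~3, bridging them via the Trotterization bound of Lemma~\ref{lem:trotter}. First I would run the sparsity learning algorithm (Algorithm~\ref{alg:intol_sparse_learn}) on $H$ with target accuracy $\epsilon' = \Theta(\epsilon/\sqrt{M})$, producing an $M$-sparse estimate $\hat H$ with $\|\boldsymbol\mu - \boldsymbol{\hat\mu}\|_\infty \le \epsilon'$ in the $\ell^\infty$ sense. Crucially, in the YES case ($H$ exactly $M$-sparse), both $H$ and $\hat H$ are supported on at most $M$ Paulis, so their difference is supported on at most $2M$ Paulis, and hence $\|H - \hat H\|_F \le \sqrt{2M}\,\epsilon' = \Theta(\epsilon)$ — with the constant chosen so this is $\le \epsilon/2$, say. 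In the NO case, $\|H_\Delta\|_F \ge \epsilon$ where $H_\Delta = H - H'$ and $H'$ is the closest $M$-sparse Hamiltonian; since $\hat H$ is itself $M$-sparse, $\|H - \hat H\|_F \ge \|H - H'\|_F = \|H_\Delta\|_F \ge \epsilon$ by optimality of $H'$. So the two cases are separated by the test "$\|H - \hat H\|_F \le \epsilon/2$ vs. $\ge \epsilon$".

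Next I would reduce estimating $\|H - \hat H\|_F$ to emptiness testing on the difference. We cannot access $e^{-i(H-\hat H)t}$ directly, but Lemma~\ref{lem:trotter} says a single Trotter step $U(t) = e^{i\hat H t}e^{-iH t}$, iterated $r$ times over total time $T = rt$, approximates $e^{-i(H-\hat H)T} \otimes I$ acting on $\ket\Phi$ to within $\varepsilon_{\text{Trotter}} \le T^2 \|\hat H\|_F \|H\|_2 / r$. Since $\|\hat H\|_F = O(\sqrt M)$ (it is $M$-sparse with bounded coefficients) and $\|H\|_2 \le L$, choosing $r = \Theta(T^2 \sqrt M L / \eta)$ keeps the Bell-sample distribution of the Trotterized circuit within $\eta$ (in total variation, via the state-fidelity-to-TV bound) of that of genuine evolution under $H - \hat H$. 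Then I run the tolerant emptiness testing algorithm (Algorithm~\ref{alg:tol_empt}, Theorem~\ref{thm:tol_empt}) with thresholds $\epsilon_1 = \epsilon/2$, $\epsilon_2 = \epsilon$, and spectral bound $\|H - \hat H\|_2 \le \|H\|_2 + \|\hat H\|_2 \le L + O(M)$ — using $O(M)$ since the $M$-sparse $\hat H$ has spectral norm at most $M$. Here $\epsilon_1/\epsilon_2$ is a constant bounded away from $1$, so the $(1-\epsilon_1^2/\epsilon_2^2)$ factors in Theorem~\ref{thm:tol_empt} are $\Theta(1)$.

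For the cost accounting: the learning phase costs $\widetilde{\mathcal O}(M/\epsilon') = \widetilde{\mathcal O}(M^{1.5}/\epsilon)$ total evolution time by Theorem~\ref{thm:intol_learn}. The testing phase uses $N = \widetilde{\mathcal O}((L+M)^2/\epsilon^2)$ Bell samples, each requiring evolution time $t = \mathcal O(1/(L+M))$ of the Trotterized circuit; but each Trotter step costs evolution time $t/r$ under $U(t)=e^{-iHt}$ (the $e^{i\hat H t}$ factor is a known, classically-specified unitary, carrying no evolution-time cost), so the evolution time per sample is $t$, not $rt$, giving testing evolution time $T_{\text{test}} = \widetilde{\mathcal O}((L+M)\cdot N/(L+M)) $... more carefully $T_{\text{test}} = N \cdot t = \widetilde{\mathcal O}((L+M)^2/\epsilon^2) \cdot \mathcal O(1/(L+M)) = \widetilde{\mathcal O}((L+M)/\epsilon^2)$. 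The main obstacle — and the step deserving the most care — is confirming that the Trotter overhead $r$ does not inflate the \emph{evolution time} (only the gate count / circuit depth), so that the $1/\eta$ dependence on the Trotter-accuracy parameter never multiplies into $T$; once $\eta$ is a small constant fraction of the gap $\epsilon/2$ between the two emptiness thresholds, we are done, and summing the two phases gives $T = \widetilde{\mathcal O}(LM^{1.5}/\epsilon + LM/\epsilon^2)$ after absorbing the $(L+M)$ factors appropriately into the stated form.
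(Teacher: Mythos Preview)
Your approach matches the paper's exactly: learn an $M$-sparse $\hat H$ to $\ell^\infty$-precision $\Theta(\epsilon/\sqrt M)$, argue $\|H-\hat H\|_F\le\epsilon/2$ in the YES case and $\|H-\hat H\|_F\ge\epsilon$ in the NO case (by optimality of $H'$), then run tolerant emptiness testing on $H-\hat H$ via Trotterization, correctly observing that the Trotter step count $r$ inflates gate depth but not evolution time.

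The only slip is the $L$-bookkeeping. Theorem~\ref{thm:intol_learn} requires $|\mu_x|\le 1$, so you cannot apply it directly to $H$ with $\|H\|_2\le L$; you must first rescale $H\mapsto H/L$ (the paper does this explicitly). After rescaling back, the learned coefficients of $\hat H$ are bounded by $L$, not $1$, so $\|\hat H\|_2\le LM$ and $\|\hat H\|_F\le L\sqrt M$, and the spectral bound for $H-\hat H$ is $L(M{+}1)$, not $L+M$. With that correction the testing phase already gives $\widetilde{\mathcal O}(LM/\epsilon^2)$, and the learning phase (precision $\epsilon/(2L\sqrt M)$ on the rescaled Hamiltonian) gives $\widetilde{\mathcal O}(LM^{1.5}/\epsilon)$ --- so your final ``absorbing the $(L+M)$ factors appropriately'' is not needed once the rescaling is handled correctly.
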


\begin{proof}
    Since the learning algorithm assumes that for $H = \sum_x \mu_x P_x$, $|\mu_x| \le 1$, but we are only guaranteed that $\lVert H \rVert_2 \le L$, we must rescale $H$ by a factor of $1/L$ for the learning algorithm. Doing so, gives us a $M$-sparse approxiation $\hat H/L$, which we then rescale accordingly by a factor of $L$ to get the true Hamiltonian.
    
    Suppose the Hamiltonian $H$ is exactly $M$-sparse. Directly applying the intolerant sparsity learning algorithm, we can learn each coefficient to precision $\frac{\epsilon}{2L\sqrt M}$ with high probability, where
    $$\lVert H - \hat H \rVert_F \le \frac{\epsilon}{2}$$
    On the other hand, if $H$ is $\epsilon$-far from $M$-sparse, any $M$-sparse approximation $\hat H$ to $H$ will be at least $\epsilon$-far from $\hat H$, giving us that
    $$\lVert H - \hat H \rVert_F \ge \epsilon$$
    Consequently, if we now apply Theorem~\ref{thm:tol_empt}, we can use the tolerant emptiness testing protocol to distinguish between these two cases with probability $\ge 1 - \delta$. In particular, we have that $\epsilon_1 = \epsilon/2, \epsilon_2 = \epsilon$.

    In order to compute the time $t$ we must evolve for during each iteration of Bell-sampling, we need to bound $\lVert H - \hat H \rVert_2 \le \lVert H \rVert_2 + \lVert \hat H \rVert_2 \le L + LM$. Here, the upper bound on $\lVert \hat H \rVert_2$ arises from the fact that the initial sparse Hamiltonian produced directly from the learning algorithm has at most $M$ Pauli operators, each with coefficients of absolute value less than 1. After rescaling by $L$, we get that $\hat H$ has spectral norm upper bounded by $LM$. Therefore, we have that $t = \mathcal O(1/(LM))$. To time-evolve with respect to $H - \hat H$, we simply use Trotterization to time-evolve for time $\tau$ with respect to $H$ and then reverse time-evolution under $\hat H$ for time $\tau$. By Lemma \ref{lem:trotter}, using the fact that $\lVert \hat H \rVert_F \le L\sqrt M, \lVert H \rVert_2 \le L$, we get that
    $$\varepsilon_\text{Trotter} \le \frac{t^2L^2\sqrt M}{r}$$
    In order to determine a sufficient bound on the Trotter error, we must bound the absolute value of the amplitude of the component of the state orthogonal to $|\Phi\rangle$. Defining $U$ as the ideal time evolution operator, $U'$ as the Trotterized approximation, and $\Pi = I - |\Phi\rangle \langle \Phi|$ we have the following lower bound on the amplitude corrresponding to the component orthogonal to $|\Phi\rangle$:
    $$\lVert \Pi U'|\Phi\rangle\rVert_2 \ge \left|\lVert \Pi U|\Phi\rangle\rVert_2 - \lVert \Pi (U' - U)|\Phi\rangle\rVert_2\right| \ge \lVert \Pi U|\Phi\rangle\rVert_2 - \varepsilon_\text{Trotter}$$
    By using the other direction of the triangle inequality, we can also obtain an equivalent upper bound
    $$\lVert \Pi U'|\Phi\rangle\rVert_2 \le \lVert \Pi U|\Phi\rangle\rVert_2 + \lVert \Pi (U' - U)|\Phi\rangle\rVert_2 \le \lVert \Pi U|\Phi\rangle\rVert_2 + \varepsilon_\text{Trotter}$$
    Consequently, we have that in the worst-case, the amplitude is changed by at most $\varepsilon_\text{Trotter}$, irrespective of the original amplitude. Therefore, we only need to consider the worst-case changes, where the worst-case amplitude in the sparse case is increased by $\varepsilon_\text{Trotter}$ and the worst-case amplitude in the non-sparse case is decreased by $\varepsilon_\text{Trotter}$. For our purposes, we choose
    $$\varepsilon_\text{Trotter} \le \frac{\epsilon t}{8},$$
    giving us that
    $$r = \frac{8tL^2\sqrt{M}}{\epsilon}$$
    In the sparse case, we have that the largest possible sampling probability in the ideal setting is $p_1 = \epsilon_1^2t^2 = \frac{\epsilon^2 t^2}{4}$, meaning that
    $$\lVert \Pi U'|\Phi\rangle\rVert_2 \le \epsilon_1't = \frac{5}{8}\epsilon t \implies \epsilon_1' = \frac{5\epsilon}{8}$$
    On the other hand, for the sparse case, we have that the smallest possible sampling probability in the ideal setting is $p_2 = 2c\epsilon_2^2t^2 =  \frac{(\epsilon_1^2 + \epsilon_2^2)t^2}{2} = \frac{5}{8}\epsilon^2t^2$, meaning that
    $$\lVert \Pi U'|\Phi\rangle\rVert_2 \ge \sqrt{2c'}\epsilon_2't = t\sqrt\frac{\epsilon_1'^2 + \epsilon_2'^2}{2} = \left(\sqrt\frac{5}{8} - \frac{1}{8}\right)\epsilon t \implies \epsilon_2' = \frac{\epsilon\sqrt{57-8\sqrt{10}}}{8}$$
    With this choice of $\varepsilon_\text{Trotter}$, we therefore have $\epsilon_2' > \epsilon_1'$. Consequently, we can use these as the corrected emptiness thresholds to account for the Trotterization error.
    
    Computing the total evolution time, we have from Theorem \ref{thm:intol_learn}, that the total evolution time for the learning algorithm with precision $\frac{\epsilon}{2L\sqrt M}$ is given as
    $$T_1 = \widetilde{\mathcal O}\left(\frac{LM^{1.5}}{\epsilon}\right)$$
    Furthermore, from Theorem~\ref{thm:tol_empt}, since we are testing emptiness with $\epsilon_1 = \epsilon/2$ and $\epsilon_2 = \epsilon$, we have a total evolution time of
    $$T_2 = \mathcal O\left(\frac{LM\log(1/\delta)}{\epsilon^2}\right)$$
    Combining these expressions and ignoring logarithmic factors gives a total evolution time of
    $$T = \widetilde {\mathcal O}\left(\frac{LM^{1.5}}{\epsilon} + \frac{LM}{\epsilon^2}\right)$$
\end{proof}

Altogether, we can write this algorithm as follows:
\begin{algorithm}[H]
\caption{Intolerant Sparsity Testing}\label{alg:intol_sparse_test}
\begin{algorithmic}[1]
\REQUIRE Query access to $U_H(t) = e^{-iHt}$, sparsity threshold $\epsilon$, sparsity $M$, spectral bound $L$, error rate $\delta$
\STATE Set Trotter parameters $t \leftarrow \mathcal O\left(\frac{1}{LM}\right), r \leftarrow \frac{8tL^2\sqrt M}{\epsilon}$
\STATE Set emptiness thresholds $\epsilon_1' \leftarrow \frac{5\epsilon}{8}, \epsilon_2' \leftarrow \frac{\epsilon\sqrt{57-8\sqrt{10}}}{8}$
\STATE Run Algorithm \ref{alg:intol_sparse_learn} with $U_1(t) = U_H(t/L)$, threshold $\frac{\epsilon}{2L\sqrt{M}}$ and store output scaled by $L$ $\to \hat H$.
\STATE Run Algorithm \ref{alg:tol_empt} with $U_2(t) = (e^{i\hat Ht/r}U_H(t/r))^r$, spectral bound $L(M + 1)$, $\epsilon_1 = \epsilon_1', \epsilon_2 = \epsilon_2', \delta$ and store output $\to A$.
\IF{$A = \texttt{EMPTY}$}
    \RETURN \texttt{$M$-SPARSE}
\ELSE
    \RETURN \texttt{NOT $M$-SPARSE}
\ENDIF
\ENSURE Whether $H$ is $M$-sparse or not with probability of success $\ge 1 - \delta$
\end{algorithmic}
\end{algorithm}

\bibliographystyle{unsrt}
\bibliography{refs}

\begin{thebibliography}{10}

\bibitem{Seif2021compressed}
Alireza Seif, Mohammad Hafezi, and Yi-Kai Liu.
\newblock Compressed sensing measurement of long-range correlated noise.
\newblock {\em arXiv preprint arXiv:2105.12589}, 2021.

\bibitem{evans2019scalablebayesianhamiltonianlearning}
Tim~J. Evans, Robin Harper, and Steven~T. Flammia.
\newblock Scalable {Bayesian} {Hamiltonian} learning.
\newblock {\em arXiv preprint arXiv:1912.07636}, 2019.

\bibitem{li2020hamiltonian}
Zhi Li, Liujun Zou, and Timothy~H Hsieh.
\newblock Hamiltonian tomography via quantum quench.
\newblock {\em Physical review letters}, 124(16):160502, 2020.

\bibitem{che2021learning}
Liangyu Che, Chao Wei, Yulei Huang, Dafa Zhao, Shunzhong Xue, Xinfang Nie, Jun Li, Dawei Lu, and Tao Xin.
\newblock Learning quantum {Hamiltonians} from single-qubit measurements.
\newblock {\em Physical Review Research}, 3(2):023246, 2021.

\bibitem{HaahKothariTang2022optimal}
Jeongwan Haah, Robin Kothari, and Ewin Tang.
\newblock Optimal learning of quantum {Hamiltonians} from high-temperature {Gibbs} states.
\newblock In {\em 2022 IEEE 63rd Annual Symposium on Foundations of Computer Science (FOCS)}, pages 135--146. IEEE, 2022.

\bibitem{yu2023robust}
Wenjun Yu, Jinzhao Sun, Zeyao Han, and Xiao Yuan.
\newblock Robust and efficient {Hamiltonian} learning.
\newblock {\em Quantum}, 7:1045, 2023.

\bibitem{hangleiter2024robustlylearninghamiltoniandynamics}
Dominik Hangleiter, Ingo Roth, Jonas Fuksa, Jens Eisert, and Pedram Roushan.
\newblock Robustly learning the {Hamiltonian} dynamics of a superconducting quantum processor.
\newblock {\em arXiv preprint arXiv:2108.08319}, 2024.

\bibitem{StilckFrança2024}
Daniel Stilck~Fran{\c{c}}a, Liubov~A. Markovich, V.~V. Dobrovitski, Albert~H. Werner, and Johannes Borregaard.
\newblock Efficient and robust estimation of many-qubit {Hamiltonians}.
\newblock {\em Nature Communications}, 15(1):311, Jan 2024.

\bibitem{ZubidaYitzhakiEtAl2021optimal}
Assaf Zubida, Elad Yitzhaki, Netanel~H Lindner, and Eyal Bairey.
\newblock Optimal short-time measurements for {Hamiltonian} learning.
\newblock {\em arXiv preprint arXiv:2108.08824}, 2021.

\bibitem{BaireyAradEtAl2019learning}
Eyal Bairey, Itai Arad, and Netanel~H Lindner.
\newblock Learning a local {Hamiltonian} from local measurements.
\newblock {\em Physical review letters}, 122(2):020504, 2019.

\bibitem{bairey2020learning}
Eyal Bairey, Chu Guo, Dario Poletti, Netanel~H Lindner, and Itai Arad.
\newblock Learning the dynamics of open quantum systems from their steady states.
\newblock {\em New Journal of Physics}, 22(3):032001, 2020.

\bibitem{GranadeFerrieWiebeCory2012robust}
Christopher~E Granade, Christopher Ferrie, Nathan Wiebe, and David~G Cory.
\newblock Robust online {Hamiltonian} learning.
\newblock {\em New Journal of Physics}, 14(10):103013, 2012.

\bibitem{gu2022practical}
Andi Gu, Lukasz Cincio, and Patrick~J Coles.
\newblock Practical {Hamiltonian} learning with unitary dynamics and {Gibbs} states.
\newblock {\em Nature Communications}, 15(1):312, 2024.

\bibitem{wilde2022learnH}
Frederik Wilde, Augustine Kshetrimayum, Ingo Roth, Dominik Hangleiter, Ryan Sweke, and Jens Eisert.
\newblock Scalably learning quantum many-body {Hamiltonians} from dynamical data.
\newblock {\em arXiv preprint arXiv:2209.14328}, 2022.

\bibitem{KrastanovZhouEtAl2019stochastic}
Stefan Krastanov, Sisi Zhou, Steven~T Flammia, and Liang Jiang.
\newblock Stochastic estimation of dynamical variables.
\newblock {\em Quantum Science and Technology}, 4(3):035003, 2019.

\bibitem{Caro_2024}
Matthias~C. Caro.
\newblock Learning quantum processes and {Hamiltonians} via the {Pauli} transfer matrix.
\newblock {\em ACM Transactions on Quantum Computing}, 5(2):1–53, June 2024.

\bibitem{MobusBluhmCaroEtAl2023dissipation}
Tim M{\"o}bus, Andreas Bluhm, Matthias~C Caro, Albert~H Werner, and Cambyse Rouz{\'e}.
\newblock Dissipation-enabled bosonic {Hamiltonian} learning via new information-propagation bounds.
\newblock {\em arXiv preprint arXiv:2307.15026}, 2023.

\bibitem{HolzapfelEtAl2015scalable}
M~Holz{\"a}pfel, T~Baumgratz, M~Cramer, and Martin~B Plenio.
\newblock Scalable reconstruction of unitary processes and {Hamiltonians}.
\newblock {\em Physical Review A}, 91(4):042129, 2015.

\bibitem{HuangTongFangSu2023learning}
Hsin-Yuan Huang, Yu~Tong, Di~Fang, and Yuan Su.
\newblock Learning many-body {Hamiltonians} with {Heisenberg}-limited scaling.
\newblock {\em Physical Review Letters}, 130(20):200403, 2023.

\bibitem{dutkiewicz2023advantage}
Alicja Dutkiewicz, Thomas~E O'Brien, and Thomas Schuster.
\newblock The advantage of quantum control in many-body {Hamiltonian} learning.
\newblock {\em arXiv preprint arXiv:2304.07172}, 2023.

\bibitem{MiraniHayden2024learning}
Arjun Mirani and Patrick Hayden.
\newblock Learning interacting fermionic {Hamiltonians} at the {Heisenberg} limit.
\newblock {\em arXiv preprint arXiv:2403.00069}, 2024.

\bibitem{NiLiYing2024quantum}
Hongkang Ni, Haoya Li, and Lexing Ying.
\newblock Quantum {Hamiltonian} learning for the {Fermi-Hubbard} model.
\newblock {\em Acta Applicandae Mathematicae}, 191(1):1--16, 2024.

\bibitem{LiTongNiGefenYing2023heisenberg}
Haoya Li, Yu~Tong, Tuvia Gefen, Hongkang Ni, and Lexing Ying.
\newblock Heisenberg-limited {Hamiltonian} learning for interacting bosons.
\newblock {\em npj Quantum Information}, 10(1):83, 2024.

\bibitem{BoixoSomma2008parameter}
Sergio Boixo and Rolando~D Somma.
\newblock Parameter estimation with mixed-state quantum computation.
\newblock {\em Physical Review A—Atomic, Molecular, and Optical Physics}, 77(5):052320, 2008.

\bibitem{bakshi2024structure}
Ainesh Bakshi, Allen Liu, Ankur Moitra, and Ewin Tang.
\newblock Structure learning of {Hamiltonians} from real-time evolution.
\newblock {\em arXiv preprint arXiv:2405.00082}, 2024.

\bibitem{WangLi2024simulation}
Ke~Wang and Xiantao Li.
\newblock Simulation-assisted learning of open quantum systems.
\newblock {\em Quantum}, 8:1407, 2024.

\bibitem{odake2023universal}
Tatsuki Odake, Hl{\'e}r Kristj{\'a}nsson, Philip Taranto, and Mio Murao.
\newblock Universal algorithm for transforming hamiltonian eigenvalues.
\newblock {\em arXiv preprint arXiv:2312.08848}, 2023.

\bibitem{higgins2007entanglement}
Brendon~L Higgins, Dominic~W Berry, Stephen~D Bartlett, Howard~M Wiseman, and Geoff~J Pryde.
\newblock Entanglement-free {Heisenberg-limited} phase estimation.
\newblock {\em Nature}, 450(7168):393--396, 2007.

\bibitem{ma2024learningkbodyhamiltonianscompressed}
Muzhou Ma, Steven~T. Flammia, John Preskill, and Yu~Tong.
\newblock Learning $k$-body hamiltonians via compressed sensing, 2024.

\bibitem{Zhao2025learning}
Andrew Zhao.
\newblock Learning the structure of any hamiltonian from minimal assumptions.
\newblock In {\em Proceedings of the 57th Annual ACM Symposium on Theory of Computing}, pages 1201--1211, 2025.

\bibitem{hu2025ansatzfreehamiltonianlearningheisenberglimited}
Hong-Ye Hu, Muzhou Ma, Weiyuan Gong, Qi~Ye, Yu~Tong, Steven~T. Flammia, and Susanne~F. Yelin.
\newblock Ansatz-free hamiltonian learning with heisenberg-limited scaling, 2025.

\bibitem{Arunachalam2025testing}
Srinivasan Arunachalam, Arkopal Dutt, and Francisco Escudero~Guti{\'e}rrez.
\newblock Testing and learning structured quantum hamiltonians.
\newblock In {\em Proceedings of the 57th Annual ACM Symposium on Theory of Computing}, pages 1263--1270, 2025.

\bibitem{Odake2024higher}
Tatsuki Odake, Hl{\'e}r Kristj{\'a}nsson, Akihito Soeda, and Mio Murao.
\newblock Higher-order quantum transformations of hamiltonian dynamics.
\newblock {\em Physical Review Research}, 6(1):L012063, 2024.

\bibitem{GoldreichGoldwasserRon1998property}
Oded Goldreich, Shari Goldwasser, and Dana Ron.
\newblock Property testing and its connection to learning and approximation.
\newblock {\em Journal of the ACM (JACM)}, 45(4):653--750, 1998.

\bibitem{bluhm2024hamiltonian}
Andreas Bluhm, Matthias~C Caro, and Aadil Oufkir.
\newblock Hamiltonian property testing.
\newblock {\em arXiv preprint arXiv:2403.02968}, 2024.

\bibitem{KallaugherLiang2025hamiltonian}
John Kallaugher and Daniel Liang.
\newblock Hamiltonian locality testing via {Trotterized} postselection.
\newblock {\em arXiv preprint arXiv:2505.06478}, 2025.

\bibitem{ArunachalamDuttEscudero2024arXiv}
Srinivasan Arunachalam, Arkopal Dutt, and Francisco Escudero.
\newblock Testing and learning structured quantum hamiltonians.
\newblock {\em arXiv preprint arXiv:2411.00082}, 2024.

\bibitem{PRXQuantum.3.040305}
Yulong Dong, Lin Lin, and Yu~Tong.
\newblock Ground-state preparation and energy estimation on early fault-tolerant quantum computers via quantum eigenvalue transformation of unitary matrices.
\newblock {\em PRX Quantum}, 3:040305, Oct 2022.

\bibitem{PhysRevA.92.062315}
Shelby Kimmel, Guang~Hao Low, and Theodore~J. Yoder.
\newblock Robust calibration of a universal single-qubit gate set via robust phase estimation.
\newblock {\em Phys. Rev. A}, 92:062315, Dec 2015.

\bibitem{PhysRevX.11.011020}
Andrew~M. Childs, Yuan Su, Minh~C. Tran, Nathan Wiebe, and Shuchen Zhu.
\newblock Theory of trotter error with commutator scaling.
\newblock {\em Phys. Rev. X}, 11:011020, Feb 2021.

\bibitem{4973881}
stochasticboy321 (https://math.stackexchange.com/users/269063/stochasticboy321).
\newblock Lower bound for kullback–leibler divergence between bernoulli distributed random variables.
\newblock Mathematics Stack Exchange.
\newblock URL:https://math.stackexchange.com/q/4973881 (version: 2024-09-20).

\end{thebibliography}

\appendix

\section{Kullback-Leibler Divergence Lower Bound}

We now prove the inequality used to lower bound the KL divergence in the derivation of the intolerant emptiness testing protocol. This bound and proof were sourced from \cite{4973881}.

\begin{lemma}[Kullback-Leibler Divergence Lower Bound \cite{4973881}]\label{lem:kl}
    Given two Bernoulli distributions with probabilities $x, y$, the KL divergence between them is bounded as follows:
    $$D(x||y) \ge \begin{cases}
        \frac{(x - y)^2}{2y} & x \le y \\
        \frac{(x - y)^2}{2x} & x \ge y
    \end{cases}$$
\end{lemma}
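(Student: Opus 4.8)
The plan is to fix the first argument $x$ and view the Bernoulli divergence $D(x\|y) = x\ln\frac{x}{y} + (1-x)\ln\frac{1-x}{1-y}$ as a one-variable function of $y$, then compare it with the claimed quadratic bound by a monotonicity argument anchored at $y=x$, where the divergence, the bound, and their first $y$-derivatives all vanish simultaneously. The degenerate cases in which $x$ or $y$ lies in $\{0,1\}$ are disposed of separately: e.g.\ if $y\to 0$ with $x>0$ then $D(x\|y)\to+\infty$ and dominates the bounded right-hand side, and the remaining boundary values follow by continuity; so it suffices to work on $x,y\in(0,1)$. Throughout I will use the identity
\[
\partial_y D(x\|y) = \frac{y-x}{y(1-y)},
\]
obtained by direct differentiation.

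For the regime $x\le y$, define $f(y) := D(x\|y) - \frac{(y-x)^2}{2y}$ on $[x,1)$, so that $f(x)=0$. Differentiating, and using $\frac{d}{dy}\frac{(y-x)^2}{2y} = \frac{(y-x)(y+x)}{2y^2}$, gives
\[
f'(y) = (y-x)\left(\frac{1}{y(1-y)} - \frac{y+x}{2y^2}\right) = (y-x)\cdot\frac{(y-x)+y(x+y)}{2y^2(1-y)}.
\]
For $0<x\le y<1$ each factor on the right is nonnegative (the numerator $(y-x)+y(x+y)$ because $y\ge x$), so $f'\ge 0$ on $[x,1)$ and hence $f(y)\ge f(x)=0$, which is precisely the first case of the claim.

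For the regime $x\ge y$, set symmetrically $h(y) := D(x\|y) - \frac{(x-y)^2}{2x}$ on $(0,x]$, so $h(x)=0$. Since $\frac{d}{dy}\frac{(x-y)^2}{2x} = \frac{y-x}{x}$,
\[
h'(y) = (y-x)\left(\frac{1}{y(1-y)} - \frac{1}{x}\right) = (y-x)\cdot\frac{x-y+y^2}{xy(1-y)}.
\]
Now $y-x\le 0$ while $x-y+y^2\ge y^2>0$ (using $x\ge y$), so $h'(y)\le 0$ on $(0,x]$; thus $h$ is nonincreasing and, reaching the value $0$ at its right endpoint $y=x$, satisfies $h(y)\ge 0$ for all $y\in(0,x]$, which is the second case. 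The two cases together give the lemma.

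The argument involves no genuine obstacle beyond bookkeeping: the points to watch are that the derivative computations are carried out correctly, that the split by the sign of $x-y$ is matched to the switch of the denominator between $2y$ and $2x$, that the numerators $(y-x)+y(x+y)$ and $x-y+y^2$ are nonnegative in their respective regimes (both elementary once the case hypothesis is used), and that the boundary of the unit square is handled by the continuity and blow-up remarks above.
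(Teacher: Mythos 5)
Your proof is correct: the derivative identity $\partial_y D(x\|y)=\frac{y-x}{y(1-y)}$ is right, both factorizations of $f'$ and $h'$ check out, the sign analysis of the numerators $(y-x)+y(x+y)$ and $x-y+y^2$ uses the case hypothesis exactly where needed, and the monotonicity conclusions (anchored at the common zero $y=x$) deliver both inequalities; the boundary cases are correctly dispatched by the blow-up of $D$ and by continuity of both sides in $x$. The route differs from the paper's: the paper invokes Taylor's theorem with Lagrange remainder, expanding $D$ to second order around the diagonal --- in the variable $x$ (with $y$ fixed) for the case $x\le y$, and in the variable $y$ (with $x$ fixed) for the case $x\ge y$ --- and then lower-bounds the second partial derivative ($f_{xx}(\xi,y)\ge 1/y$, resp.\ $f_{yy}(x,\eta)\ge 1/x$) over the relevant interval. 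Your argument instead treats both cases as a single-variable monotonicity comparison in $y$: you subtract the claimed quadratic bound, observe that the difference and its $y$-derivative vanish at $y=x$, and show the derivative has the correct sign on each side. What the paper's approach buys is that the bound emerges transparently as ``half the infimum of the second derivative times $(x-y)^2$,'' which explains where the denominators $2y$ and $2x$ come from; what your approach buys is the avoidance of the mean-value point $\xi$ and of Taylor's theorem altogether, at the cost of an explicit (but elementary) factorization of the derivative of the difference. Either proof is complete and both are of comparable length.
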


\begin{proof}
    We define the KL divergence for two Bernoulli distributions with parameters $x, y$ as follows:
    $$D(x||y) = x \ln \frac{x}{y} + (1 - x)\ln\frac{1 - x}{1 - y} = f(x, y)$$
    We start by discussing the edge cases. First, we consider the case where both $x, y$ are either 0 or 1. We have that $D(0||0) = D(1||1) = 0$ is trivially bounded by the lower bound we derived. The same is true for $D(1||0) = D(0||1) = +\infty$. If $y \in \{0, 1\}$, but $x \in (0, 1)$, we have that $D(x||y) = +\infty$ while the lower bound given is finite, which is also satisfactory. Lastly, we consider the case where $x \in \{0, 1\}, y \in (0, 1)$, where we get the following bounds:
    \begin{align*}
        D(0||y) &= \ln\frac{1}{1-y} \ge y \ge \frac{y}{2} \\
        D(1||y) &= \ln\frac{1}{y} \ge 1 - y \ge \frac{(1 - y)^2}{2},
    \end{align*}
    where the RHS represents the lower bound in the lemma.
    
    Now that we have addressed the edge cases, we now assume $x, y \in (0, 1)$. Computing the partial derivatives w.r.t. $y$, we have the following:
    \begin{align*}
        f_x(x, y) &= \ln \frac xy - \ln\frac{1 - x}{1 - y} & f_y(x, y) &= -\frac{x}{y} + \frac{1 - x}{1 - y} \\
        f_{xx}(x, y) &= \frac{1}{x} + \frac{1}{1 - x} & f_{yy}(x, y) &= \frac{x}{y^2} + \frac{1 - x}{(1 - y)^2}
    \end{align*}

    We first consider $x \le y$. Fixing $y$, the Lagrange form of Taylor's Theorem states that
    $$f(x, y) = f(y, y) + f_y(y, y)(x - y) + \frac{f_{xx}(\xi, y)}{2}(x - y)^2 = \frac{f_{xx}(\xi, y)}{2}(x - y)^2$$
    for $\xi \in [x, y]$. We can then bound the second derivative as follows:
    $$f_{xx}(\xi, y) \ge \frac{1}{\xi} \ge \frac{1}{y},$$
    meaning that
    $$f(x, y) = D(x||y) \ge \frac{(x - y)^2}{2y}, \quad x \le y$$

    To obtain the other bound, we instead assume $x \ge y$ and fix $x$, giving us the following:
    $$f(x, y) = f(x, x) + f_y(x, x)(y - x) + \frac{f_{yy}(x, \eta)}{2}(y - x)^2 = \frac{f_{yy}(x, \eta)}{2}(y - x)^2$$
    where $\eta \in [y, x]$. We can bound the second partial derivative as follows:
    $$f_{yy}(x, \eta) \ge \frac{x}{\eta^2} \ge \frac{1}{x},$$
    meaning that
    $$f(x, y) = D(x||y) \ge \frac{(x - y)^2}{2x}, \quad x \ge y$$
\end{proof}

\end{document}